\documentclass[journal,draftcls,onecolumn,11pt,twoside]{IEEEtran}
\normalsize
\usepackage{amsmath,amssymb,amsthm}
\usepackage{enumerate}
\usepackage{stfloats}
\usepackage{comment}
\usepackage{subcaption}
\usepackage{siunitx}
\usepackage{cite}

\usepackage[dvipsnames]{xcolor}
\definecolor{myPink}{RGB}{255,105,183}

\usepackage[T1]{fontenc}
\usepackage{graphics} 
\usepackage{epsfig} 
\usepackage[mathscr]{euscript}
\usepackage{algorithm}
\usepackage[noend]{algpseudocode}
\makeatletter
\def\BState{\State\hskip-\ALG@thistlm}
\makeatother

\usepackage{soul}
\usepackage{tikz}
\usetikzlibrary{arrows,shapes,chains,matrix,positioning,scopes,patterns,calc,
decorations.markings,
decorations.pathmorphing,
}

\usepackage{pgfplots}
\pgfplotsset{compat=1.3}
\usepgflibrary{shapes}

\newtheorem{theorem}{Theorem}
\newtheorem{lemma}[theorem]{Lemma}

\newtheorem{definition}[theorem]{Definition}

\renewcommand{\epsilon}{\varepsilon}

\newcommand{\RNum}[1]{\uppercase\expandafter{\romannumeral #1\relax}}

\newcommand{\dv}{\vec{\mathrm{d}}}

\DeclareMathAlphabet{\mcl}{OMS}{cmsy}{m}{n}

\newlength\tikzwidth
\newlength\tikzheight

\textfloatsep=0.05in

\definecolor{mycolor1}{rgb}{0.63529,0.07843,0.18431}%
\definecolor{mycolor2}{rgb}{0.00000,0.44706,0.74118}%
\definecolor{mycolor3}{rgb}{0.00000,0.49804,0.00000}%
\definecolor{mycolor4}{rgb}{0.87059,0.49020,0.00000}%
\definecolor{mycolor5}{rgb}{0.00000,0.44700,0.74100}%
\definecolor{mycolor6}{rgb}{0.74902,0.00000,0.74902}%


\newif\iflonger
\longertrue

\newif\ifhashcoll
\hashcollfalse

\newif\ifproof
\prooftrue
\pagenumbering{arabic}
\hyphenation{op-tical net-works semi-conduc-tor}
\usepackage{pgfplots}
\usepackage{tikz}
\usepackage{cite}
\usepackage{amsmath}
\usepackage{amssymb}
\def\fig_path{./Figures}
\begin{document}
\title {{Collaborative Decoding of Polynomial Codes for Distributed Computation}}

%
\author{\begin{tabular}{c} Adarsh M. Subramaniam,
Anoosheh Heiderzadeh,
Krishna R. Narayanan
\end{tabular} \\
Department of Electrical and Computer Engineering, \\
Texas A\&M University
}
\maketitle

\begin{abstract}
We show that polynomial codes (and some related codes) used for distributed matrix multiplication are {\em interleaved} Reed-Solomon codes and, hence, can be collaboratively decoded. We consider a fault tolerant setup where $t$ worker nodes return erroneous values. For an additive random Gaussian error model, we show that for all $t < N-K$, errors can be corrected with probability 1. Further, numerical results show that in the presence of additive errors, when $L$ Reed-Solomon codes are collaboratively decoded, the numerical stability in recovering the error locator polynomial improves with increasing $L$.
\end{abstract}

\begin{IEEEkeywords}
Distributed computation, collaborative decoding, polynomial codes
\end{IEEEkeywords}

\section{Introduction and Main Result}
We consider the problem of computing $\mathbf{A}^{\mathsf{T}}\mathbf{B}$ for two matrices $\mathbf{A} \in \mathbb{F}^{s \times r}$ and $\mathbf{B} \in \mathbb{F}^{s \times r'}$ (for an arbitrary field $\mathbb{F}$)\footnote{Some results in this paper will apply to specific fields and this will be clarified later.} in a distributed fashion with $N$ worker nodes using a coded matrix multiplication scheme \cite{yu2018polycode,yu2018straggler,dutta2018unified,unidecode,lagrange_coded,speeding_up,shortdot,coded_unified,highdim,stserv_mult,inv_pro}
To keep the presentation clear, we will focus on one class of codes, namely Polynomial codes, and explain our results in relation to the Polynomial codes~\cite{yu2018polycode}; notwithstanding, our results also apply to Entangled Polynomial codes \cite{yu2018straggler} and PolyDot codes \cite{dutta2018unified}. 
We assume that the matrices $\mathbf{A}$ and $\mathbf{B}$ are split into $m$ subblocks and $n$ subblocks, respectively. These subblocks are encoded using a Polynomial code~\cite{yu2018straggler}. Each worker node performs a matrix multiplication and returns a matrix with a total of $L=\frac{rr'}{mn}$ elements (from $\mathbb{F}$) to the master node. 

Our main interest is in the fault-tolerant setup where some of the $N$ worker nodes return erroneous values. We say that an error pattern of Hamming weight $t$ has occurred if $t$ worker nodes return matrices that contain some erroneous values. 
The main idea in the Polynomial codes, Entangled Polynomial codes and PolyDot codes is to encode the subblocks of $\mathbf{A}$ and $\mathbf{B}$ in a clever way such that the matrix product returned by the worker nodes are symbols of a codeword of a Reed-Solomon (RS) code over $\mathbb{F}$. The properties of an RS code are then used to obtain bounds on the error-correction capability of the scheme.

The main contribution of this work relies on the observation that Polynomial codes, Entangled Polynomial codes, and PolyDot codes are not just RS codes, but an Interleaved Reed-Solomon (IRS) code which consists of several RS codes that can be collaboratively decoded (see Section~\ref{poly ir} or~\cite{schmidt2009collaborative} for a formal definition). 
This additional structure provides the opportunity for collaborative decoding of multiple RS codes involved in such coded matrix multiplication schemes. Such a collaborative decoding, for which efficient multi-sequence shift-register (MSSR) based decoding algorithms exist~\cite{krachkovsky1997decoding}, provides a practical decoder with quadratic complexity in $t$, while potentially nearly doubling the decoding radius.

The main results of this paper and their relation to the existing results are as follows. In \cite{yu2018straggler}, it is shown that any error pattern with Hamming weight $t$ can be corrected if $t \leq  \lfloor\frac{N-K}{2}\rfloor$
where $K = mn$ is the effective dimension of the Polynomial code. Very recently, Dutta {\em et al.} in~\cite{dutta2018unified} showed that when $\mathbb{F} = \mathbb{R}$ (the real field) and error values are randomly distributed according to a Gaussian distribution, with probability $1$ all error patterns of Hamming weight $t \leq N-K-1$ can be corrected.  To attain this bound,~\cite{dutta2018unified} uses a decoding algorithm which is similar in spirit to exhaustive maximum likelihood decoding with a complexity that is $O\big(LN^{\min\{t,N-t\}}\big)$. This can be prohibitive for many practical values of $N$ and $t$. 
In \cite{dutta2018unified}, it is suggested that in practice, the performance of ML decoding can be approximated by algorithms with 
polynomial complexity in $N$ such as the $\ell_1$-minimization algorithm~\cite{candes2005decoding}.
However, there is no proof (nor evidence) that such algorithms can correct all error patterns of Hamming weight up to $N-K-1$ with probability $1$. Indeed, as we will show in this work, the standard $\ell_1$-minimization based decoding algorithm~\cite{candes2005decoding} fails to correct all error patterns of Hamming weight up to $N-K-1$ with a non-zero probability.

In this work, we show that we can use the MSSR decoding algorithm of~\cite{krachkovsky1997decoding} for decoding Polynomial codes with the complexity of $O\big(Lt^2+N\big)$. For this algorithm, we will show that when $\mathbb{F} = \mathbb{F}_q$ (a finite field with $q$ elements),
for $\lfloor\frac{N-K}{2}\rfloor  < t \leq \frac{L}{L+1} (N-K)$, all but a fraction $\gamma(t)$ of the error patterns of Hamming weight $t$ can be corrected where $\gamma(t) \rightarrow 0$ as $q \rightarrow \infty$. 
In particular, the convergence of $\gamma(t)$ to zero is exponentially fast in $L$, i.e., $\gamma(t) = q^{-\Omega(L)},$ for $\lfloor \frac{N-K}{2} \rfloor < t \leq \frac{L}{L+1} (N-K)$. In addition, when $\mathbb{F} = \mathbb{R}$, by extending the results of~\cite{krachkovsky1997decoding} and~\cite{schmidt2006} to the real field and using the results of \cite{dutta2018unified}, we will show that for $L \geq N-K-1$ and ${\lfloor\frac{N-K}{2}\rfloor< t \leq N-K-1}$, all error patterns of Hamming weight $t$ can be corrected with probability $1$, under the random Gaussian error model previously considered in \cite{dutta2018unified}.


In a nutshell, our results show that with a probability arbitrarily close to $1$ (or respectively, with probability $1$), all error patterns of Hamming weight up to $\frac{L}{L+1}(N-K)$, which can be made arbitrarily close to $N-K-1$ for sufficiently large $L$, can be corrected for sufficiently large finite fields (or respectively, the real field). Not only does this indicate a substantial increase in the error-correction radius with provable guarantees when compared to the results in \cite{yu2018straggler}, 
but it also shows that the Dutta {\em et al}.'s upper bound in \cite{dutta2018unified} can be achieved with a practical decoder with a quadratic complexity in the number of faulty worker nodes ($t$). This improvement in complexity is the result of collaboratively decoding the IRS code instead of separately decoding the RS codes using a maximum likelihood decoder as is done in~\cite{dutta2018unified}.


\section{Review of Polynomial Codes for Distributed Matrix Multiplication}
\subsection{Notation} 
Throughout the paper, we denote matrices by boldface capital letters, e.g., $\mathbf{A}$, and denote vectors by boldface small letters, e.g., $\mathbf{a}$. For an integer $i\geq 1$, we denote $\{1,\dots,i\}$ by $[i]$, and for two integers $i$ and $j$ such that $i<j$, we denote $\{i,i+1,\dots,j\}$ by $[i,j]$.  We use the short notation $((f(i,j))_{i\in [m],j\in [n]})$ to represent an $m\times n$ matrix whose entry $(i,j)$ is $f(i,j)$, where $f(i,j)$ is a function of $i$ and $j$. We occasionally use the compact notation $(\mathbf{a}_1,\mathbf{a}_2,\dots,\mathbf{a}_{n})$ to represent an $m\times n$ matrix whose columns are the column-vectors $\mathbf{a}_1,\mathbf{a}_2,\dots,\mathbf{a}_n$, each of length $m$. Similarly, sometimes we use the compact notation $(\mathbf{a}_1;\mathbf{a}_2;\dots;\mathbf{a}_m)$ to represent an $m\times n$ matrix whose rows are the row-vectors $\mathbf{a}_1,\mathbf{a}_2,\dots,\mathbf{a}_m$, each of length $n$. We also denote by $\mathbf{A}(i,:)$ and $\mathbf{A}(:,j)$ the $i$th row and the $j$th column of a matrix $\mathbf{A}$, respectively. A vector or a matrix with a $\wedge$ above is an estimate.


\subsection{Polynomial Codes}
In this section, we review the Polynomial codes of Yu, Maddah-Ali and Avestimehr \cite{yu2018polycode} for distributed matrix multiplication. Consider the problem of computing $\mathbf{A}^{\mathsf{T}} \mathbf{B}$ in a distributed fashion for two matrices $\mathbf{A} \in \mathbb{F}^{s \times r}$ and $\mathbf{B} \in \mathbb{F}^{s \times r'}$ for an arbitrary field $\mathbb{F}$. In the scheme of Polynomial codes in~\cite{yu2018polycode}, the master node distributes the task of matrix multiplication among $N$ worker nodes as follows. 

The columns of $\mathbf{A}$ and $\mathbf{B}$ are first partitioned into $m$ partitions $\mathbf{A}_0,\mathbf{A}_1,\dots,\mathbf{A}_{m-1}$ of equal size $\frac{r}{m}$ and $n$ partitions $\mathbf{B}_0,\mathbf{B}_1,\dots,\mathbf{B}_{n-1}$ of equal size $\frac{r'}{n}$, respectively,
$$\mathbf{A}=[\mathbf{A}_0\  \mathbf{A}_1 \cdots \mathbf{A}_{m-1}], \ \ \mathbf{B}=[\mathbf{B}_0 \ \mathbf{B}_1 \cdots \mathbf{B}_{n-1}].$$

Let $x_1,x_2,\ldots,x_N$ be $N$ distinct elements in $\mathbb{F}$. 
For two parameters $\alpha,\beta\in [N]$, let $\tilde{\mathbf{A}}_i$ and $\tilde{\mathbf{B}}_i$ be matrices defined by,
\[\tilde{\mathbf{A}}_i=\sum\limits_{j=0}^{m-1} \mathbf{A}_j x_{i}^{j\alpha}, \ \ 
\tilde{\mathbf{B}}_{i}=\sum\limits_{j=0}^{n-1} \mathbf{B}_j x_i^{j\beta}.\]
The dimensions of the matrices $\tilde{\mathbf{A}}_i$ and $\tilde{\mathbf{B}}_i$ are $s \times \frac{r}{m}$ and $s \times \frac{r'}{n}$, respectively. 

The $i$th worker node computes the smaller matrix product $\tilde{\mathbf{C}}_i$ given the values of $\tilde{\mathbf{A}}_i$ and $\tilde{\mathbf{B}}_i$, 
\begin{equation}\label{eq prod}
\tilde{\mathbf{C}}_i=\tilde{\mathbf{A}}_i^{\mathsf{T}}\tilde{\mathbf{B}}_i = \sum\limits_{j=0}^{m-1} \sum\limits_{k=0}^{n-1} \mathbf{A}_j^{\mathsf{T}} \mathbf{B}_k \  {x_i^{j\alpha + k\beta}}. 
\end{equation}
The parameters $\alpha$ and $\beta$ are chosen carefully such that for each pair $(j,k)$ the corresponding exponent of $x_i$ (i.e., $j\alpha+k\beta$) is distinct. For instance, one such choice for $\alpha$ and $\beta$ is $\alpha=1$ and $\beta=m$. In this case, the $i$th worker node essentially evaluates $\mathbf{P}(x)$ at $x=x_i$ and returns $\mathbf{P}(x_i)$, where 
\begin{equation} \label{eq poly}
\mathbf{P}(x)=\sum\limits_{j=0}^{m-1} \sum\limits_{k=0}^{n-1} \mathbf{A}_j^{\mathsf{T}} \mathbf{B}_k \ x^{j+km}.    
\end{equation}
The coefficients in the polynomial $\mathbf{P}(x)$ are the $mn$ uncoded symbols of the product $\tilde{\mathbf{C}}_i$ in (\ref{eq prod}). The crux of the Polynomial code is that the vector of coded symbols $\left(\mathbf{P}(x_1),\ldots,\mathbf{P}(x_N)\right)=(\tilde{\mathbf{C}}_1,\tilde{\mathbf{C}}_2,\cdots,\tilde{\mathbf{C}}_N)$ can be considered as a codeword of a Reed-Solomon (RS) code. If $N$ worker nodes are available in the distributed system, a Polynomial code essentially evaluates the polynomial $\mathbf{P}(x)$ at $N$ points of the field $\mathbb{F}$; any $mn$ of which can recover the coefficients which can be put together to recover the matrix product. The minimum number of worker nodes that need to compute and return the correct evaluations of $\mathbf{P}(x)$ for the master node to be able to successfully recover the matrix product $\mathbf{A}^{\mathsf{T}} \mathbf{B}$ is called the \emph{recovery threshold}. Viewing the recovery process of a Polynomial code as a polynomial interpolation operation, it can be seen that the recovery threshold of the Polynomial code is $mn$~\cite{yu2018polycode}.

\section{Polynomial Codes are Interleaved Reed-Solomon Codes} \label{poly ir}

\begin{definition}{Generalized Reed-Solomon (GRS) Codes:} Let $\mathbf{m} = (m_0,m_1,\ldots,m_{K-1})$ and let the associated polynomial $m(x)$ be defined as $m(x) := m_0+m_1 x+\ldots+m_{K-1}x^{K-1}$. Further, let
$\mathbf{c} = (c_0,c_1,\dots,c_{N-1})$,
$\pmb{\alpha} = (\alpha_0,\alpha_1,\ldots,\alpha_{N-1})$ and
$\mathbf{v} = (v_0,v_1,\ldots,v_{N-1})$ be three row vectors such that $c_i, \alpha_i, v_i \in \mathbb{F}$, $v_i \neq 0$, and $\alpha_i \neq \alpha_j$.
A Generalized Reed-Solomon (GRS) code $\mathcal{C}$ over $\mathbb{F}$ of length $N$, dimension $K$, evaluation points $\pmb{\alpha}$, weight vectors $\mathbf{v}$, denoted by \emph{GRS$(\mathbb{F},N,K,\pmb{\alpha},\mathbf{v})$}, is the set of all row-vectors (codewords) $\mathbf{c} = (v_0 m(\alpha_0), v_1 m(\alpha_1), \ldots, v_{N-1} m(\alpha_{N-1}))$, i.e., $c_i =v_i  m(\alpha_i)$. Equivalently, a GRS code is also the set of codewords $\mathbf{c}$ such that for all $i\in [0,N-K-1]$, $\sum_{j=0}^{N-1} u_j c_j (\alpha_i)^j = 0$, where $u_i^{-1} = v_i \prod\limits_{j \neq i} (\alpha_i-\alpha_j)$.
The minimum distance of such a GRS code is $d_{\min} = N-K+1$.
\end{definition}

Reed-Solomon (RS) codes are a special case of GRS codes with $v_i = 1, u_i = 1, \forall i \in [0,N-1]$. For finite fields and the complex field, an $\pmb{\alpha}$ exists such that $v_i = 1$ and $u_i = 1$, $ i \in [0,N-1]$. However for the real field, $u_i$ and $v_i$ cannot be simultaneously set to 1 and, hence, it is required to consider GRS codes. 

\begin{definition}{Interleaved Generalized Reed-Solomon (IGRS) Codes} \cite{schmidt2009collaborative}: Let $\{\mathcal{C}^{(l)}\}_{l\in [L]}$ be a collection of $L$ GRS codes $\mathcal{C}^{(l)}\triangleq \emph{\text{RS}}(\mathbb{F},N,K^{(l)},\pmb{\alpha},\pmb{u})$, each of length $N$ over a field $\mathbb{F}$, where the dimension and minimum distance of the $l$th GRS code are $K^{(l)}$ and $d^{(l)}$, respectively. Then, an Interleaved Generalized Reed-Solomon (IGRS) code $\mathcal{C}_{\rm{IGRS}}$ is the set of all $L\times N$ matrices $(\mathbf{c}^{(1)};\mathbf{c}^{(2)};\dots;\mathbf{c}^{(L)})$ 
where $\mathbf{c}^{(l)} \in \mathcal{C}^{(l)}$ for $l\in [L]$ \cite{krachkovsky1997decoding}. If all the $L$ GRS codes $\mathcal{C}^{(l)}$ are equivalent, i.e., $\mathcal{C}^{(l)}=\mathcal{C}$ for all $l\in [L]$, the IGRS code $\mathcal{C}_{\mathrm{IRS}}$ is called \emph{homogeneous}.
\end{definition}

The chief observation in this work is that the Polynomial codes, Entangled Polynomial codes, and PolyDot codes are IGRS codes. Here, we formally prove this observation for the Polynomial codes. We shall henceforth refer to GRS codes and IGRS codes as RS codes and IRS codes, respectively. 

\begin{theorem}
A Polynomial code is an IRS code.
\end{theorem}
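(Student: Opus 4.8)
The plan is to exhibit the $L = rr'/(mn)$ scalar component codes of a Polynomial code explicitly and show that each is a GRS code over $\mathbb{F}$ with a \emph{common} evaluation-point vector $\pmb{\alpha}$ and common multipliers $\mathbf{v}$, which is precisely the definition of an IGRS (hence IRS) code. The key realization is that the matrix $\mathbf{P}(x)$ in~(\ref{eq poly}) has $L$ scalar entries, and each entry is a scalar polynomial in $x$ whose coefficients are the corresponding scalar entries of the matrix coefficients $\mathbf{A}_j^{\mathsf{T}}\mathbf{B}_k$. So the first step is to index the $L$ positions of the returned matrix, say by $(p,q)$ with $p\in[r/m]$ and $q\in[r'/n]$, and define for each such position the scalar polynomial $P^{(p,q)}(x) := \sum_{j=0}^{m-1}\sum_{k=0}^{n-1}\bigl(\mathbf{A}_j^{\mathsf{T}}\mathbf{B}_k\bigr)(p,q)\,x^{j+km}$, using the exponent assignment $\alpha=1,\beta=m$ from~(\ref{eq poly}).

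Second, I would observe that each $P^{(p,q)}(x)$ is a polynomial of degree at most $mn-1$ in $x$, and that the $i$th worker returns the scalar $P^{(p,q)}(x_i)$ at its assigned evaluation point $x_i$. Collecting these over $i\in[N]$, the vector $\bigl(P^{(p,q)}(x_1),\ldots,P^{(p,q)}(x_N)\bigr)$ is exactly the evaluation of a message polynomial of degree $<K$, with $K=mn$, at the $N$ distinct points $x_1,\ldots,x_N$. Matching this against the GRS definition, with evaluation points $\alpha_i=x_i$ and weights $v_i=1$, shows each component codeword lies in $\mathrm{GRS}(\mathbb{F},N,K,\pmb\alpha,\mathbf 1)$ (or, over $\mathbb{R}$, in a GRS code with the appropriate nonunit $v_i$, which is why the general GRS framework was set up in the preceding definition). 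The crucial point is that the evaluation points $x_1,\ldots,x_N$ and the dimension $K=mn$ are identical across all $L$ positions, so the $L$ component codes coincide, making the interleaved code \emph{homogeneous}.

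Third, I would stack the $L$ component codewords as the rows of an $L\times N$ matrix and note that this is literally the object $(\mathbf c^{(1)};\mathbf c^{(2)};\dots;\mathbf c^{(L)})$ appearing in the IGRS definition, with each $\mathbf c^{(l)}\in\mathcal C$ for the common code $\mathcal C=\mathrm{GRS}(\mathbb{F},N,K,\pmb\alpha,\mathbf v)$. This completes the identification: the collection of $N$ returned worker matrices, read position-by-position, forms a codeword of a homogeneous IGRS code with parameters $N$, $K=mn$, $L=rr'/(mn)$.

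I do not expect a serious obstacle here, since the statement is essentially an unpacking of definitions once the right indexing is chosen; the only point requiring a little care is the bookkeeping that the exponent map $(j,k)\mapsto j+km$ is a bijection onto $\{0,1,\ldots,mn-1\}$, which guarantees that the $K=mn$ matrix coefficients $\mathbf A_j^{\mathsf T}\mathbf B_k$ correspond to distinct monomials and hence that each scalar message polynomial genuinely has $K$ free coefficients of degree $<K$. The mildly subtle part, rather than a true difficulty, is remembering that over $\mathbb{R}$ one generally cannot take $v_i=u_i=1$ simultaneously, so the argument must be phrased in terms of GRS codes (with a suitable $\mathbf v$) rather than ordinary RS codes, exactly as flagged in the remark following the GRS definition.
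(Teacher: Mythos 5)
Your proposal is correct and follows essentially the same route as the paper's own proof: reading the returned worker matrices position-by-position (the paper formalizes this via a vectorizing operator $\Gamma$ and the stacked matrix $\mathbf{D}$), identifying each scalar position $(p,q)$ with the evaluation of the message polynomial $\sum_{j,k}[\mathbf{A}_j^{\mathsf{T}}\mathbf{B}_k](p,q)\,x^{j+km}$ at the common points $x_1,\dots,x_N$, and stacking the $L=\frac{rr'}{mn}$ resulting codewords into a homogeneous IRS codeword with $K^{(l)}=mn$. Your added remarks on the bijectivity of $(j,k)\mapsto j+km$ and on needing the GRS framework over $\mathbb{R}$ are sound and consistent with the paper's setup.
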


\begin{proof}
Let $\mathbf{W}$ be an $a\times b$ matrix with entries from $\mathbb{F}$, and let $\Gamma:\mathbb{F}^{a\times b} \rightarrow \mathbb{F}^{ab}$ denote a vectorizing operator which reshapes a matrix $\mathbf{W}$ into a column-vector $\mathbf{w}=(w_1,\dots,w_{ab})^{\mathsf{T}}$, i.e., $\Gamma(\mathbf{W})=\mathbf{w}$, such that $w_{(i-1)b+j}=\mathbf{W}(i,j)$, where $\mathbf{W}(i,j)$ is the element $(i,j)$ of $\mathbf{W}$. 

Let $\tilde{\mathbf{C}}_i(p,q)$ be the element $(p,q)$ of the matrix $\tilde{\mathbf{C}}_i$, 
\begin{equation} \label{eq poly irs}
\tilde{\mathbf{C}}_i(p,q)\triangleq\sum\limits_{j=0}^{m-1} \sum\limits_{k=0}^{n-1} [\mathbf{A}_j^{\mathsf{T}} \mathbf{B}_k]{(p,q)} x_i^{j+km}.  
\end{equation} Consider the $\frac{rr'}{mn} \times N$ matrix $\mathbf{D}\triangleq (\Gamma(\tilde{\mathbf{C}}_1),\Gamma(\tilde{\mathbf{C}}_2),\dots,\Gamma(\tilde{\mathbf{C}}_{N}))$, 
where the $i$th column of $\mathbf{D}$, namely $\Gamma(\tilde{\mathbf{C}}_i)$, is obtained by applying the vectorizing operator $\Gamma$ to $\tilde{\mathbf{C}}_i$. Let $(p_i,q_i)$ be the unique pair $(p,q)$ such that $i=(p-1)\frac{r'}{n}+q$. Then, the element $(i,j)$ of $\mathbf{D}$ is $\tilde{\mathbf{C}}_j{(p_i,q_i)}$, and accordingly, 
the $i$th row of $\mathbf{D}$ is given by $[\tilde{\mathbf{C}}_1{(p_i,q_i)}, \tilde{\mathbf{C}}_2{(p_i,q_i)}, 
\ldots,\tilde{\mathbf{C}}_N{(p_i,q_i)}]$, which is a codeword of an RS code. Thus the matrix $\mathbf{D}$ is a codeword of an IRS code with $L = \frac{rr'}{mn}$. In particular, the $i$th worker node computes $\tilde{\mathbf{C}}_i$ that has dimension $\frac{r}{m}\times\frac{r'}{n}$. 
It is evident from (\ref{eq poly irs}) that the element $(p,q)$ of $\tilde{\mathbf{C}}_i$ is the message polynomial $ \sum\limits_{j=0}^{m-1} \sum\limits_{k=0}^{n-1} [\mathbf{A}_j^{\mathsf{T}} \mathbf{B}_k]{(p,q)} x^{j+km} $ evaluated at $x_i$. Thus, $\tilde{\mathbf{C}}_i$ contains $\frac{rr'}{mn}$ RS codes evaluated at $x_i$ by the $i$th worker node. That is, the computations returned by the $i$th worker node constitute the $i$th column of an IRS code with $N$ being the number of worker nodes and $L=\frac{rr'}{mn}$ being the number of RS codes. This shows that a Polynomial code is a homogeneous IRS code with $K^{(l)}=mn$ for $l\in [L]$.
\end{proof}



\subsection{Error Matrix and Error Models} 
\label{fte} We consider the case when the worker nodes introduce additive errors in their computation. Let ${\mathbf{E}_i\in \mathbb{F}^{\frac{r}{m}\times\frac{r'}{n}}}$ denote the error matrix introduced by the $i$th worker node. Then the master node receives the set of matrices $\tilde{\mathbf{R}}_i$, for $i\in [N]$ where
$\tilde{\mathbf{R}}_i = \tilde{\mathbf{C}}_i \oplus \tilde{\mathbf{E}}_i$. Let $\mathbf{R}$ be the $\frac{rr'}{mn}\times N$ matrix of values received by the master node where the $i$th column of $\mathbf{R}$ is given by $\Gamma(\tilde{\mathbf{R}}_i)$, and let $\mathbf{E}$, referred to as the \emph{error matrix}, be the $\frac{rr'}{mn}\times N$ matrix of error values where the $i$th column of $\mathbf{E}$ is given by $\Gamma(\tilde{\mathbf{E}}_i)$. Then, $\mathbf{R}=\mathbf{D}\oplus\mathbf{E}$
where $\mathbf{D}$ is a codeword of an IRS code. If the $i$th worker node returns erroneous values, then the $i$th column of $\mathbf{R}$ will contain errors. Thus, 
the original problem of fault-tolerant distributed matrix multiplication reduces to the problem of decoding $\mathbf{D}$ from $\mathbf{R}$.

\begin{definition}
The Hamming weight of the matrix $\mathbf{E}$ denoted by $W_H(\mathbf{E})$ is defined as the number of non-zero columns in $\mathbf{E}$.
\end{definition}

We consider two different error models. First, we consider the Uniform Random Error for Finite Fields (UREF) model where the non-zero columns of the error matrix $\mathbf{E}$ are assumed to be uniformly distributed over all the non-zero vectors in $\mathbb{F}^L_{q}$ for a finite field $\mathbb{F}_q$. We further extend this model to the real field $\mathbb{R}$ where each non-zero entry in the error matrix $\mathbf{E}$ is assumed to be an independently and identically distributed Gaussian random variable (with arbitrary mean and variance). 
This model is referred to as the Gaussian Random Error (GRE) model. 

\subsection{Decoding and Error Events}

Let $\psi: \mathbb{F}^{L \times N} \rightarrow \{\mathcal{C}_{\rm{IRS}},F\}$ be the decoding function, where $F$ is a symbol that denotes decoding failure. 
A \emph{decoding error} is said to have occurred if $\psi(\mathbf{R}) \neq \mathbf{D}$. An \emph{undetected decoding error} is said to have occurred if $\psi(\mathbf{R}) \neq \mathbf{D}$ and $\psi(\mathbf{R})\neq F$, whereas a \emph{decoding failure} is said to have occurred if $\psi(\mathbf{R})=F$.

\section{Collaborative Decoding of Interleaved Reed-Solomon Codes}
\indent Simultaneous decoding of all the RS codes in an IRS code is known as \emph{collaborative decoding}. As shown in~\cite{schmidt2009collaborative} and~\cite{krachkovsky1997decoding}, collaborative decoding of IRS codes has certain advantages. In particular, when burst errors occur, they occur on the same column of the IRS code. Hence, multiple RS codewords share the same error positions. 
Note that an IRS code is actually a set of RS codes stacked together, each of which yields a set of syndrome equations. Intuitively, when burst errors occur, the error locator polynomials are more or less the same for all the RS codes but the number of syndrome equations increases with the number of stacked RS codes. This implies that a much larger set of errors can be corrected. This is because the rank of the stacked syndrome matrix is greater than or equal to the rank of the individual syndrome matrices, thus giving rise to the possibility of a greater decoding radius than the unique decoding bound of $\frac{1-R}{2}$, where $R$ is the code rate. More specifically, it was shown by Schmidt {\em et al.} in~\cite{schmidt2009collaborative} that when a set of $L$ RS codes are collaboratively decoded, except for a small probability of failure and a small probability of error (discussed in Section~\ref{pe_pf_section}), the fraction of errors that can be corrected can be as large as $\frac{L}{L+1}(1-R)$. 

\section{Decoding Algorithms} \label{sec_alg}
\subsection{Collaborative Peterson's Algorithm}
In this section, we propose a collaborative version of the Peterson's algorithm \cite{moon2005error} to
correct up to $t \leq t_{\text{max}}\triangleq  \frac{L}{L+1}(N-K)$ errors.

Consider $t$ non-zero errors in columns $j_1,j_2,\dots,j_t$ of the matrix $\mathbf{R}$ (i.e., the indices of the non-zero columns of the error matrix $\mathbf{E}$ are $j_1,j_2,\dots,j_t$).
Let $r^{(l)}(z) \triangleq \sum_{j=0}^{N-1} u_j \mathbf{R}(l,j) z^{j-1}$ be the modified (multiplying component-wise by $u_j$) received polynomial for the $l$th RS code, where $\mathbf{R}(l,j)$ is the element $(l,j)$ of the matrix $\mathbf{R}$. Then, the syndrome sequence for the $l$th RS code is given by $S^{(l)} \triangleq \{S^{(l)}_i\}_{i=0}^{N-K-1}$, where $S_i^{(l)} \triangleq \sum_{j=0}^{N-1} u_j \mathbf{R}(l,j) \alpha_j^i$ for $i\in[0,N-K-1]$. Define the error locator polynomial $\Lambda(z)$ as
\[
\Lambda(z) \triangleq \prod\limits_{i=1}^t (1- z \alpha_{j_i} ) = 1+\lambda_1 z + \dots + \lambda_t z^t
\]
and let $\boldsymbol{\lambda}(t) = (\lambda_t, \lambda_{t-1}, \dots, \lambda_1)^{\mathsf{T}}$ be the error locator vector associated with the error locator polynomial $\Lambda(z)$. When $t$ errors occur $\Lambda(z)$ has a degree of $t$.
The syndrome matrix $\mathbf{S}^{(l)}(t)$ and a vector $\mathbf{a}^{(l)}(t)$ for the $l$th RS code are given by 
\begin{equation}\label{eq:SlTl}\mathbf{S}^{(l)}(t)\triangleq\begin{pmatrix}
S^{(l)}_0 & S^{(l)}_1 & \cdots & S^{(l)}_{t-1} \\
S^{(l)}_1 & S^{(l)}_2 & \cdots & S^{(l)}_{t} \\
\vdots & \vdots &  & \vdots \\
S^{(l)}_{N-K-t-1} & S^{(l)}_{N-K-t} & \cdots & S^{(l)}_{N-K-2} \\
\end{pmatrix}, \ 
\mathbf{a}^{(l)}(t) \triangleq \begin{pmatrix}
-S^{(l)}_t \\
-S^{(l)}_{t+1} \\
\vdots \\
-S^{(l)}_{N-K-1}
\end{pmatrix}
\end{equation} Now we can write the following consistent linear system of equations for the IRS code,
\begin{equation}\label{eq:4}
\underbrace{
\begin{pmatrix}
\mathbf{S}^{(1)}(t)\\
\mathbf{S}^{(2)}(t)\\
\vdots \\
\mathbf{S}^{(L)}(t)
\end{pmatrix}}_{\mathbf{S}_L(t)}  \underbrace{
\begin{pmatrix}
\lambda_t\\
\lambda_{t-1} \\
\vdots \\
\lambda_{1}
\end{pmatrix}}_{\boldsymbol{\lambda}(t)} = \underbrace{
\begin{pmatrix}
\mathbf{a}^{(1)}(t)\\    
\mathbf{a}^{(2)}(t)\\  
\vdots \\
\mathbf{a}^{(L)}(t)
\end{pmatrix}
}_{\mathbf{a}_{L}(t)}
\end{equation} where $\mathbf{S}_L(t)$, the syndrome matrix for the IRS code, is the stacked matrix of $\mathbf{S}^{(l)}(t)$ for $l\in [L]$, and $\mathbf{a}_L(t)$, a vector for the IRS code, is the stacked vector of $\mathbf{a}^{(l)}(t)$ for $l\in [L]$. 
If $t$ columns of the matrix $\mathbf{R}$ are in error, then the error locator vector $\boldsymbol{\lambda}(t)$ can be obtained by the collaborative Peterson's algorithm, described in Algorithm~\ref{pet_alg}. The complexity of computing the rank of $\mathrm{rank}
(\mathbf{S}_L(\tau))$ is $O(L\tau^3)$; computing $\hat{\boldsymbol{\lambda}}$ requires $O(\tau^3)$ operations if the structure of $\mathbf{S}_L(\tau)$ is not exploited, and the Chien search has a complexity of $O(N)$. Since we have to consider all values of $\tau \in [t_{\max}]$, the overall complexity is $O(Lt_{\max}^4+N)$.

\begin{definition}
($t$-valid polynomial $\Lambda(z)$): A polynomial $\Lambda(z)$ over $\mathbb{F}$ is called $t$-valid if it is a polynomial of degree $t$ and possesses exactly $t$ distinct roots in $\mathbb{F}$. 
\end{definition}

\begin{algorithm}
\caption{Collaborative Peterson's algorithm for IRS Decoding}
\textbf{Input}: $S^{(l)}=\{S^{(l)}_i\}_{i=0}^{N-K-1} \ \forall  l\in [L]$\\
\textbf{Output}:  $\hat{\mathbf{D}} \in \{\mathbb{F}^{L \times N}, F \text{ (decoding failure)}\}$

\begin{algorithmic}[1]
\State $\hat{\mathbf{D}} = F$
\If {$\mathbf{S}_L(t) = \mathbf{0}$}  

\State $\hat{\mathbf{D}} = \mathbf{R}$ 

\Else
\For{each $t$ from 1 to $t_{\rm max}$}
\If {$\mathrm{rank}(\mathbf{S}_{L}^{\mathsf{T}}(t) \mathbf{S}_L(t)) = t$}
\State {$\hat{\boldsymbol{\lambda}} = (\mathbf{S}_{L}^{\mathsf{T}}(t) \mathbf{S}_L(t))^{-1} \mathbf{S}_{L}^{\mathsf{T}}(t)) \mathbf{a}_L(t)$}
\If {$\mathbf{S}_{L}(t)\ \hat{\boldsymbol{\lambda}} = \mathbf{a}_{L}(t)$}
\State {$(\hat{\lambda}_t, \hat{\lambda}_{t-1}, \dots, \hat{\lambda}_1) = \hat{\boldsymbol{\lambda}}^{\mathsf{T}}$}
\State {$\hat{\Lambda}(z) = 1+\hat{\lambda}_1z+\dots+\hat{\lambda}_t z^t$}
\If{$\hat{\Lambda}(z)$ is $t$-valid}
\State {Compute error locations $\hat{j}_i, \hat{j}_2, \ldots, \hat{j}_t$ using a Chien search \cite{moon2005error}}
\For{each $l$ from $1$ to $L$}
\State {From $\hat{j}_1, \ldots, \hat{j}_t$, and $S^{(l)}$, compute $\hat{\mathbf{E}}{(l,:)}$ using Forney's~algorithm~\cite{moon2005error}}
\State {Compute $\hat{\mathbf{D}}(l,:) = {\mathbf{R}}(l,:) - \hat{\mathbf{E}}{(l,:)}$}
\EndFor
\EndIf
\EndIf
\EndIf
\EndFor
\EndIf
\end{algorithmic}
\label{pet_alg}
\end{algorithm}

\subsection{Multiple Sequence Shift Register algorithm}
A more computationally efficient decoding algorithm 
to achieve error correction up to $t \leq t_{\mathrm{max}}=\frac{L}{L+1}(N-K)$ is the Multiple Sequence Shift Register (MSSR) algorithm proposed by Schmidt {\em et al.} in \cite{schmidt2006}. This algorithm has a complexity of $O(Lt^2+N)$. The MSSR algorithm, reviewed here for completeness, is described in Algorithm~\ref{alg:mssr}.

 
\begin{algorithm}
\caption{Collaborative IRS Decoder (Schmidt \emph{et. al}~\cite{schmidt2009collaborative})}
\textbf{Input}:
${S}^{(l)}=\{S^{(l)}_i\}_{i=0}^{N-K-1} \ \forall  l\in [L]$ \\
\textbf{Output}: $\mathbf{\hat{D}} \in \{\mathbb{F}^{L \times N},F \text{ (decoding failure)}\}$
\begin{algorithmic}[1]
\State {Synthesize $t$ and $\hat{\Lambda}(z)$ using the shift register synthesis algorithm in \cite{schmidt2006}}
\State {[$t$, $\hat{\Lambda}(z)$]\ = Shift Register Synthesis Algorithm(${S}^{(1)}, \dots,{S}^{(L)}$)} 
\State {$\hat{\mathbf{D}} = F$} 
\If{$t\leq t_{\max}$ and $\hat{\Lambda}(z)$ is $t$-valid}
\For{each $l$ from $1$ to $L$}
\State {From $\hat{\Lambda}(z)$ compute $\hat{\mathbf{E}}{(l,:)}$}
\State {Compute $\hat{\mathbf{D}}(l,:) = \hat{\mathbf{R}}(l,:) - \hat{\mathbf{E}}{(l,:)}$}
\EndFor
\EndIf
\end{algorithmic}
\label{alg:mssr}
\end{algorithm}

It can be seen that in the absence of numerical round-off errors, the outputs of the collaborative Peterson's algorithm and the MSSR algorithm are identical for every $\mathbf{R}$ since both of them compute the solution to (\ref{eq:4}). 

\section{Analysis of probability of failure and error for finite fields ($\mathbb{F} = \mathbb{F}_q$)}\label{pe_pf_section}
\indent In Section~\ref{poly ir}, we showed that Polynomial codes are IRS codes. Hence the fault tolerance of the Polynomial codes can be analyzed using similar techniques for IRS codes. In this section, we consider the uniformly random error model for finite fields (UREF), defined in Section~\ref{fte}, which was originally considered in~\cite{schmidt2009collaborative}. 
In particular, we define the error events 
\begin{align}
\label{eqn:errorsets}
\begin{split}
    &\mathcal{E}_1(t)  = \{\mathbf{E}: W_H(\mathbf{E})=t \ \text{and the MSSR/collaborative } 
          \text{algorithm fails}  \}, \\
    &\mathcal{E}_2(t) = \{\mathbf{E}: W_H(\mathbf{E})=t \ \text{and the MSSR/collaborative } 
        \text{algorithm makes an undetected error} \}, \\
    & \mathcal{E}(t) = \{\mathbf{E}: W_H(\mathbf{E})=t \}.
\end{split}    
\end{align}

Since the outputs of the collaborative Peterson's algorithm and the MSSR algorithm are identical for every $\mathbf{R}$, both algorithms have the same probability of failure and the same probability of undetected error. We denote by $P_{F}(t)$ and $P_{\rm ML}(t)$ the probability of failure and the probability of undetected error, respectively, given that $W_H(\mathbf{E})=t$. Under the UREF model, $P_{F}(t)$ and $P_{\rm ML}(t)$ are given by \cite{schmidt2009collaborative}
\[P_{F}(t)=\frac{|\mathcal{E}_1(t)|}{|\mathcal{E}(t)|}, \ P_{\rm ML}(t)=\frac{|\mathcal{E}_2(t)|}{|\mathcal{E}(t)|}.
\]

\subsection{Probability of Failure}
\label{sec:finitefieldpf}

A necessary condition for the failure of both the collaborative Peterson's algorithm and the MSSR algorithm is that the matrix $\mathbf{S}_L(t)$ is not full rank, as shown in \cite{schmidt2009collaborative}. To calculate an upper bound on $P_F(t)$, we refer to the analysis by schmidt {\em et al.} in~\cite{schmidt2009collaborative}, and recall the following result from \cite{schmidt2009collaborative}.

\begin{theorem}\cite[Theorem 7]{schmidt2009collaborative}
\label{eq:pf}
Under the UREF model, for all $t \leq t_{\max}= \frac{L}{L+1}(N-K)$, 
\begin{equation}
P_F(t) \leq \Bigg( \frac{q^L - \frac{1}{q}}{q^L-1}\Bigg) \frac{q^{-(L+1)(t_{\max}-t)}}{q-1}. 
\end{equation}
\end{theorem}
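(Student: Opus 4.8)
The plan is to bound $P_F(t)=|\mathcal{E}_1(t)|/|\mathcal{E}(t)|$ by replacing the event of algorithmic failure with the (implied) event that the stacked syndrome matrix $\mathbf{S}_L(t)$ is rank deficient, and then to count the error patterns that cause this deficiency. Since rank deficiency of $\mathbf{S}_L(t)$ is a necessary condition for failure, $|\mathcal{E}_1(t)|\le \#\{\mathbf{E}:W_H(\mathbf{E})=t,\ \mathrm{rank}\,\mathbf{S}_L(t)<t\}$. Writing an error pattern as a choice of $t$ error columns $j_1<\dots<j_t$ together with $t$ nonzero value-vectors in $\mathbb{F}_q^L$, we have $|\mathcal{E}(t)|=\binom{N}{t}(q^L-1)^t$. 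First I would argue that the number of value-assignments producing a rank-deficient $\mathbf{S}_L(t)$ is the same for every location set $\{j_k\}$: this follows from the MDS/Vandermonde structure, under which the linear-algebraic conditions depend only on the (uniform) matroid of the evaluation columns and not on the particular points chosen. Hence the factor $\binom{N}{t}$ cancels, and it suffices to bound $B(t)/(q^L-1)^t$, where $B(t)$ is the number of bad value-assignments for one fixed location set.

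The second step exposes the algebraic structure of $\mathbf{S}_L(t)$. Using $S^{(l)}_i=\sum_{k=1}^t Y_k^{(l)}X_k^i$ with $X_k=\alpha_{j_k}$ and $Y_k^{(l)}=u_{j_k}\mathbf{E}(l,j_k)$, each block factors as $\mathbf{S}^{(l)}(t)=\Phi\,\mathrm{diag}(\mathbf{Y}^{(l)})\,\Psi$, where $\Phi$ is the $s\times t$ Vandermonde in the $X_k$ with $s\triangleq N-K-t$ rows, and $\Psi$ is a $t\times t$ invertible Vandermonde. Hence $\mathrm{rank}\,\mathbf{S}_L(t)=\mathrm{rank}\,\mathbf{M}$, where $\mathbf{M}$ is the $Ls\times t$ matrix whose $k$th column is the rank-one tensor $\mathbf{Y}_k\otimes\mathbf{a}_k$, with $\mathbf{a}_k=(1,X_k,\dots,X_k^{s-1})^{\mathsf{T}}$ the $k$th column of $\Phi$ and $\mathbf{Y}_k=(Y_k^{(1)},\dots,Y_k^{(L)})$. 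Because scaling by the nonzero $u_{j_k}$ preserves the uniform distribution over $\mathbb{F}_q^L\setminus\{0\}$, the rows $\mathbf{Y}_k$ are i.i.d.\ uniform over the $q^L-1$ nonzero vectors. Two facts follow: $\mathbf{M}$ can have full column rank only when $Ls\ge t$, which is exactly $t\le t_{\max}$; and the relevant excess dimension is $Ls-t=(L+1)(t_{\max}-t)$, precisely the exponent appearing in the claimed bound.

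The core is the count $B(t)$. A dependence $\mathbf{M}\boldsymbol{\mu}=\mathbf{0}$ with $\boldsymbol{\mu}\ne\mathbf{0}$ is equivalent to $\sum_k\mu_k\,\mathbf{Y}_k^{\mathsf{T}}\mathbf{a}_k^{\mathsf{T}}=\mathbf{0}$ as an $L\times s$ matrix, and restricting to $S=\mathrm{supp}(\boldsymbol{\mu})$ this reads $\Phi_S\,\mathrm{diag}(\boldsymbol{\mu}_S)\,\mathbf{Y}_S=\mathbf{0}$, where $\Phi_S$ collects the columns $\mathbf{a}_k$, $k\in S$, and $\mathbf{Y}_S$ is the $|S|\times L$ block of value-vectors. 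Since any $s$ columns of $\Phi$ are independent, a nonzero solution forces $|S|>s$, and then the condition says every column of $\mathrm{diag}(\boldsymbol{\mu}_S)\mathbf{Y}_S$ lies in the $(|S|-s)$-dimensional kernel of $\Phi_S$. I would bound $B(t)$ by a union bound over supports $S$ and scalings $\boldsymbol{\mu}_S$, counting for each the value-assignments whose $S$-columns satisfy this subspace-containment constraint (roughly $q^{(|S|-s)L}$ choices), and then summing the resulting geometric-type series over $|S|$ from $s+1$ to $t$. After normalizing by $(q^L-1)^t$, the dominant term collapses to $q^{-(Ls-t)}/(q-1)=q^{-(L+1)(t_{\max}-t)}/(q-1)$, with the prefactor $\tfrac{q^L-q^{-1}}{q^L-1}$ arising from replacing the true count $q^L-1$ of nonzero vectors by $q^L$ in the intermediate uniform-matrix estimate.

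The main obstacle is exactly this last counting step: the columns of $\mathbf{M}$ are not generic vectors but structured rank-one tensors, so the clean singularity estimate for a uniform $Ls\times t$ matrix over $\mathbb{F}_q$, namely $\Pr[\mathrm{rank}<t]\approx q^{-(Ls-t)}/(q-1)$, cannot be invoked directly. The work is to show that the Vandermonde structure of $\Phi$ (equivalently, the MDS property that any $\le s$ of the $\mathbf{a}_k$ are independent) forces every rank-deficiency-inducing assignment to spend at least $L$ free value-coordinates per unit of support beyond $s$, so that the structured count is bounded by the unstructured one up to the stated prefactor, and to control the union bound tightly enough that only the leading power of $q$ survives. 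Carrying out this count with the correct constants, rather than merely the correct exponent, is the delicate part, and I expect to follow the argument of \cite{schmidt2009collaborative} there.
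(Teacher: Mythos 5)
First, a point of comparison that matters: the paper does not prove this statement at all. It is imported verbatim as Theorem 7 of \cite{schmidt2009collaborative}, with the paper contributing only the surrounding remark that rank deficiency of $\mathbf{S}_L(t)$ is necessary for failure. So your attempt can only be judged against the cited source, and there your skeleton is faithful to it: reducing failure to $\mathrm{rank}\,\mathbf{S}_L(t)<t$; factoring each block as a tall Vandermonde times a diagonal of error values times an invertible Vandermonde (the paper's own decomposition $\mathbf{S}^{(l)}(t)=\mathbf{H}^{(l)}(t)\mathbf{F}^{(l)}(t)\mathbf{D}(t)\mathbf{Y}(t)$, with your $\Psi$ absorbing $\mathbf{D}(t)\mathbf{Y}(t)$); noting that locations enter only through dimensions, so $\binom{N}{t}$ cancels; and the dimension count $Ls-t=(L+1)(t_{\max}-t)$ with $s=N-K-t$, including the observation that $Ls\geq t$ is exactly $t\leq t_{\max}$. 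All of that is correct and is genuinely the right setup.

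The gap is exactly where you concede it is, and it is not a technicality: the union bound over pairs $(S,\boldsymbol{\mu}_S)$ that you sketch cannot yield the stated inequality, only the stated exponent. For fixed $S$ with $|S|=w>s$, your count gives $(q-1)^w$ scalings times $q^{(w-s)L}$ compliant value-assignments, so the dominant $w=t$ term of the union bound is about
\begin{equation*}
\frac{(q-1)^t\, q^{(t-s)L}}{(q^L-1)^t}\;\approx\; q^{\,t-sL},
\end{equation*}
which exceeds the target $q^{-(L+1)(t_{\max}-t)}/(q-1)=q^{\,t-sL}/(q-1)$ by a factor of order $q$ --- precisely because every bad assignment is witnessed by an entire projective family of null vectors (all $q-1$ scalar multiples of $\boldsymbol{\mu}$, and more when the null space has dimension above one), and the sum over smaller supports adds further binomial factors. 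Recovering the constant-precise bound requires counting bad value matrices without witness multiplicity --- essentially stratifying by the rank of the $t\times L$ error-value matrix and using exact counts of matrices of each rank, as \cite{schmidt2009collaborative} does --- rather than a union over $(S,\boldsymbol{\mu}_S)$; in particular the prefactor $\bigl(\tfrac{q^L-1/q}{q^L-1}\bigr)\tfrac{1}{q-1}$ emerges from that rank-stratified count, not from the heuristic of replacing $q^L-1$ by $q^L$ that you offer. Since you explicitly defer this step to the cited reference, your proposal is an accurate roadmap with correct reductions, but the one step you leave open is the entire content of the theorem beyond the easy exponent, so as a standalone proof it is incomplete.
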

By the result of Theorem~\ref{eq:pf}, it can be readily seen that for all $t < t_{\rm max}$, $P_F(t)$ diminishes as $q^{-\Omega\big(L\big)}$ and for $t = t_{\rm max}$, $P_F(t)$ decays as $q^{-1}$ .

\subsection{Probability of Undetected Error}
\label{sec:finitefieldpe}
As shown in \cite[ Theorem~5]{schmidt2009collaborative}, the MSSR algorithm has the Maximum Likelihood (ML) certificate property, i.e., whenever the decoder of \cite{schmidt2006} does not fail, it yields the ML solution, namely the codeword at minimum Hamming distance from the received word. The collaborative Peterson's algorithm has the same ML certificate property as well.
An error matrix $\mathbf{E}$ with $W_H(\mathbf{E})=t$ is said to be a \emph{bad error matrix of Hamming weight $t$} if there exists a non-zero codeword $\mathbf{D} \in \mathcal{C}_{\rm IRS}$ such that $W_H(\mathbf{D} \ominus \mathbf{E}) \leq t$. 

We now use a result from \cite[Page 141]{vg2018} without proof. 
\begin{lemma}{\cite[Page 141]{vg2018}}
\label{lemma:Bs}
Let $\mathcal{C} \subseteq \{0,1,\cdots q-1\}^N$ be a code with relative distance $\delta = d_{\min}/N$, and let $S \subseteq [N]$ be such that $|S|=(1-\gamma)N$, where $0<\gamma \leq \delta-\epsilon$ for some $\epsilon>0$. Let $\mathcal{E}_S$ be the set of all error vectors with support $S^c$, and let $\mathcal{B}_S$ be the set of all bad error vectors with support $S^c$. Then,
\[
|\mathcal{B}_S| \leq q^{\frac{N}{\log_{2} q}- \frac{\epsilon N}{2}+ \frac{1}{2} }|\mathcal{E}_S|. 
\]
\end{lemma}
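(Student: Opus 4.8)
The plan is to bound the fraction $|\mathcal{B}_S|/|\mathcal{E}_S|$ as the probability that a uniformly random error vector $\mathbf{e}\in\mathcal{E}_S$ (supported on $S^c$, with $|S^c|=\gamma N$) is bad, and to control this by a union bound taken \emph{not} over codewords but over the difference support. By definition $\mathbf{e}$ is bad iff there is a nonzero codeword $\mathbf{c}\in\mathcal{C}$ with $W_H(\mathbf{e}\ominus\mathbf{c})\le W_H(\mathbf{e})\le\gamma N$. Writing $T\triangleq\mathrm{supp}(\mathbf{e}\ominus\mathbf{c})$, we have $|T|\le\gamma N$ and $\mathbf{c}=\mathbf{e}$ on $[N]\setminus T$; in particular $\mathbf{c}=0$ on $S\setminus T$ (since $\mathbf{e}$ vanishes on $S$) and $\mathbf{c}=\mathbf{e}$ on $S^c\setminus T$. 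First I would fix $T$ and estimate, over random $\mathbf{e}$, the probability that such a $\mathbf{c}$ exists, and then sum over $T$.

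Two structural facts drive the bound. First, since the nonzero codeword $\mathbf{c}$ vanishes on $S\setminus T$, its support lies in $(S\cap T)\cup S^c$, a set of size $|S\cap T|+\gamma N$; as $W_H(\mathbf{c})\ge d_{\min}=\delta N$, this forces $|S\cap T|\ge(\delta-\gamma)N\ge\epsilon N$. Thus only those $T$ having at least $\epsilon N$ positions inside $S$ can render $\mathbf{e}$ bad, which is exactly where the hypothesis $\gamma\le\delta-\epsilon$ enters. Second, for a fixed $T$ the candidate codewords are precisely the elements of the shortened code $\mathcal{C}_T$ of codewords vanishing on $Z\triangleq S\setminus T$, and the bad event requires one of them to agree with the \emph{random} values of $\mathbf{e}$ on $S^c\setminus T$. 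Since $\mathbf{e}$ is uniform on the $\gamma N$ coordinates of $S^c$, a union bound over $\mathcal{C}_T$ gives $\Pr[\mathbf{e}\text{ bad via }T]\le|\mathcal{C}_T|\,q^{-|S^c\setminus T|}$.

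For the Reed--Solomon (MDS) codes of interest, shortening an $[N,K]$ code at $|Z|$ positions yields an $[N-|Z|,K-|Z|]$ MDS code, so $|\mathcal{C}_T|=q^{K-|Z|}$ with $|Z|=(1-\gamma)N-|S\cap T|$. The exponents then telescope, since $|S\cap T|+|S^c\cap T|=|T|$: one gets $K-|Z|-|S^c\setminus T|=K-N+|T|=|T|-(N-K)\le\gamma N-(\delta N-1)=-(\delta-\gamma)N+1\le-\epsilon N+1$. Hence each admissible $T$ contributes at most $q^{-\epsilon N+1}$, and since there are at most $\sum_{i=0}^{\gamma N}\binom{N}{i}\le 2^N=q^{N/\log_2 q}$ choices of $T$, we obtain $|\mathcal{B}_S|/|\mathcal{E}_S|\le q^{N/\log_2 q-\epsilon N+1}$, which is of the claimed form. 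The precise exponent $-\tfrac{\epsilon N}{2}+\tfrac12$ is recovered once one replaces the crude bound $\sum_{i\le\gamma N}\binom{N}{i}\le 2^N$ and, for general non-MDS codes, the exact shortened dimension by a Singleton/weight-enumerator estimate, spending roughly half of the $q$-saving to absorb the entropy gap and the normalization by nonzero symbols (support exactly $S^c$ replaces $q$ by $q-1$ in the per-coordinate matching probability).

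The main obstacle is conceptual rather than computational. The naive route -- a union bound over the $\approx q^{(1-\delta)N}$ codewords -- produces a $q$-power prefactor far larger than the desired $2^N$ and cannot give the lemma. The key is to reorganize the union over the difference support $T$ (which yields the binomial $2^N$) and to exploit that, for each $T$, the codeword multiplicity $|\mathcal{C}_T|=q^{K-|Z|}$ cancels against the $q^{|S^c\setminus T|}$ ways the random $\mathbf{e}$ can fail to be matched, leaving only the margin $q^{|T|-(N-K)}$. Making this cancellation exact uses the MDS shortening structure; for a general code one must instead bound $|\mathcal{C}_T|$ through the minimum distance, and checking that this looser bound still yields the stated $q^{-\epsilon N/2+1/2}$ factor is the delicate step.
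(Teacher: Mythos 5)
There is no in-paper argument to compare against here: the paper explicitly imports this lemma from \cite[Page 141]{vg2018} ``without proof,'' so your proposal must be judged on its own terms, and on those terms it is essentially correct. The skeleton is sound: for a bad $\mathbf{e}$ with witness $\mathbf{c}\neq 0$ and $T=\mathrm{supp}(\mathbf{e}\ominus\mathbf{c})$, indeed $|T|\le\gamma N$, $\mathbf{c}$ vanishes on $S\setminus T$ and equals $\mathbf{e}$ on $S^c\setminus T$; the union over the at most $\sum_{i\le\gamma N}\binom{N}{i}\le 2^N=q^{N/\log_2 q}$ choices of $T$, combined with the count of codewords vanishing on $Z=S\setminus T$ and the matching probability on the uniformly random coordinates in $S^c\setminus T$, telescopes exactly as you compute to the exponent $K-|Z|-|S^c\setminus T|=|T|-(N-K)\le -\epsilon N+1$. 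Your decision to reorganize the union over difference supports rather than over codewords is also the right instinct: the naive codeword union bound via Singleton leaves an unabsorbable $q^{(1-\delta)N}$ prefactor, as you note.

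The two loose ends you flag are both closable, and more easily than you suggest. First, the non-MDS case needs no weight-enumerator estimate: the codewords vanishing on $Z$ restrict injectively to $T\cup S^c$, a code of length $|T\cup S^c|=\gamma N+|S\cap T|$ and minimum distance at least $d_{\min}$, so Singleton applied to this shortened code gives $|\mathcal{C}_T|\le q^{\gamma N+|S\cap T|-d_{\min}+1}$, and the same telescoping yields a per-$T$ contribution of $q^{|T|-d_{\min}+1}\le q^{-\epsilon N+1}$ for \emph{every} code with $d_{\min}=\delta N$, not just MDS codes. Second, the exact constant $-\frac{\epsilon N}{2}+\frac{1}{2}$ follows from a clean dichotomy rather than a delicate estimate. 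Tracking the $q\mapsto q-1$ correction (coordinates of $\mathbf{e}$ on $S^c$ are uniform over $q-1$ nonzero values) costs at most a factor $\big(\tfrac{q}{q-1}\big)^{\gamma N}\le 2^N$, so your completed bound reads $|\mathcal{B}_S|\le 2^{2N}q^{-\epsilon N+1}|\mathcal{E}_S|$. Now either $2^N\le q^{\epsilon N/2-1/2}$, in which case $2^{2N}q^{-\epsilon N+1}\le 2^N q^{-\epsilon N/2+1/2}=q^{N/\log_2 q-\epsilon N/2+1/2}$ and you are done; or $2^N>q^{\epsilon N/2-1/2}$, in which case the claimed factor $2^N q^{-\epsilon N/2+1/2}$ exceeds $1$ and the lemma holds vacuously because $\mathcal{B}_S\subseteq\mathcal{E}_S$. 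This dichotomy also explains the lemma's shape: it carries content only when $\log_2 q\gtrsim 2/\epsilon$, which is precisely the regime $q\ge 2^{\Omega(1/\epsilon)}$ under which \cite{vg2018} states the surrounding theorem, and it makes rigorous your phrase about ``spending half of the $q$-saving'' to absorb the entropy and normalization slack.
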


\begin{theorem}
Under the UREF model, for all $t\leq N-K-1$ (and in particular, for all $t\leq t_{\max}=\frac{L}{L+1}(N-K)$), $P_{\rm ML}(t) \to 0$ as $q^L \to \infty$.
\end{theorem}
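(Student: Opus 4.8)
The plan is to upper bound $P_{\mathrm{ML}}(t)=|\mathcal{E}_2(t)|/|\mathcal{E}(t)|$ by first relating undetected errors to \emph{bad} error matrices and then invoking Lemma~\ref{lemma:Bs}, applied to the IRS code viewed as a single block code over the extended alphabet $\mathbb{F}_q^L$.

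First I would reduce the undetected-error event to the bad-error-matrix condition. Since $\mathcal{C}_{\mathrm{IRS}}$ is $\mathbb{F}_q$-linear (a stack of linear RS codes) and the UREF error distribution does not depend on the transmitted codeword, we may assume without loss of generality that $\mathbf{D}=\mathbf{0}$ is sent, so that $\mathbf{R}=\mathbf{E}$. By the ML certificate property recalled above, whenever the decoder does not declare failure it returns a codeword at minimum column-Hamming distance from $\mathbf{R}$; hence an undetected error with $W_H(\mathbf{E})=t$ can occur only if some \emph{nonzero} codeword $\mathbf{D}'$ satisfies $W_H(\mathbf{E}\ominus\mathbf{D}')\le W_H(\mathbf{E})=t$, i.e. only if $\mathbf{E}$ is a bad error matrix of Hamming weight $t$. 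Writing $\mathcal{B}_t$ for the set of bad error matrices of weight $t$, this yields $\mathcal{E}_2(t)\subseteq\mathcal{B}_t$ and therefore $P_{\mathrm{ML}}(t)\le |\mathcal{B}_t|/|\mathcal{E}(t)|$.

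Next I would regard $\mathcal{C}_{\mathrm{IRS}}$ as a length-$N$ code over the alphabet $\mathbb{F}_q^L$ of size $Q=q^L$, identifying each column with a single $Q$-ary symbol, so that $W_H(\mathbf{E})$ is exactly the weight of $\mathbf{E}$ as a $Q$-ary word. In this column metric the homogeneous IRS code has minimum distance exactly $N-K+1$ — a single minimum-weight RS codeword placed in one row with zeros elsewhere attains it, and no codeword beats it since any nonzero codeword has a nonzero row which is a nonzero RS codeword of weight $\ge N-K+1$ — so its relative distance is $\delta=(N-K+1)/N$. I would then partition $\mathcal{E}(t)$ and $\mathcal{B}_t$ by the size-$t$ support $S^c$; for each fixed support, Lemma~\ref{lemma:Bs} applies to the $Q$-ary code $\mathcal{C}_{\mathrm{IRS}}$ with $\gamma=t/N$. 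Choosing $\epsilon=2/N$ makes the hypothesis $\gamma\le\delta-\epsilon$ exactly equivalent to $t\le N-K-1$, the claimed range. Summing the per-support bounds $|\mathcal{B}_S|\le Q^{\frac{N}{\log_2 Q}-\frac{\epsilon N}{2}+\frac12}|\mathcal{E}_S|$ over all $\binom{N}{t}$ supports gives $|\mathcal{B}_t|\le Q^{\frac{N}{\log_2 Q}-\frac{\epsilon N}{2}+\frac12}|\mathcal{E}(t)|$.

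Finally I would simplify the exponent using $Q^{N/\log_2 Q}=2^N$ and $\epsilon=2/N$, so the bound collapses to
\[
P_{\mathrm{ML}}(t)\le \frac{|\mathcal{B}_t|}{|\mathcal{E}(t)|}\le 2^N Q^{-1/2}=2^N q^{-L/2},
\]
which tends to $0$ as $q^L\to\infty$ for fixed $N$, establishing the claim for all $t\le N-K-1$ and in particular for $t\le t_{\max}$. The main obstacle is the reinterpretation step: one must check carefully that $\mathcal{C}_{\mathrm{IRS}}$, viewed over the enlarged alphabet $\mathbb{F}_q^L$, retains relative distance $(N-K+1)/N$ in the column metric, and that Lemma~\ref{lemma:Bs} — stated for a generic $q$-ary code — may be applied verbatim with $q$ replaced by $Q=q^L$ even though $\mathcal{C}_{\mathrm{IRS}}$ need not be linear over $\mathbb{F}_{q^L}$. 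Matching the lemma's support/weight convention to $\mathcal{E}(t)$ and pinning $\epsilon=2/N$ so as to recover exactly the range $t\le N-K-1$ are the remaining points requiring care.
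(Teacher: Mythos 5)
Your proposal is correct and follows essentially the same route as the paper's proof: reduce undetected errors to bad error matrices via the ML certificate property, view $\mathcal{C}_{\mathrm{IRS}}$ as a single length-$N$ code over the alphabet of size $q^L$ with minimum distance $N-K+1$, and apply Lemma~\ref{lemma:Bs} with $q$ replaced by $q^L$ and $\epsilon = 2/N$. In fact you are somewhat more explicit than the paper (verifying the column-metric distance, summing the per-support bounds over all $\binom{N}{t}$ supports, and simplifying the exponent to $2^N q^{-L/2}$), but the decomposition and the key lemma are identical.
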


\begin{proof}

It is easy to see that an IRS code can be viewed as a single code over $\mathbb{F}_{q^L}$, i.e. $\mathcal{C}_{\rm IRS}$ is a {$\big(\mathbb{F}_{q^L},N,K,N-K+1\big)$ code.} Lemma~\ref{lemma:Bs} holds for a single code and, hence, can be applied to $\mathcal{C}_{\rm IRS}$ with $q$ being replaced by $q^L$. Since the upper bound in Lemma~\ref{lemma:Bs} depends only on the cardinality of $\mathcal{E}_S$, it follows that 
the probability of having a bad error matrix with $W_H(\mathbf{E})=t$ for the $\big(\mathbb{F}_{q^L},N,K,N-K+1\big)$ code (replacing $q$ by $q^L$ since $\mathcal{C}_{\rm IRS}$ is over $q^L$) which we denote by $P_e(t)$ is upper bounded by
\begin{equation}\label{eq:Pet}
P_e(t) = \frac{|\mathcal{B}_S|}{|\mathcal{E}_S|}\leq q^{
 L(\frac{N}{\log_{2} q^L}- \frac{\epsilon N}{2}+ \frac{1}{2})}.
\end{equation}
By setting $\delta = \frac{N-K+1}{N}$ and $\epsilon = \frac{2}{N}$,
it is easy to see that $P_e(t) \rightarrow 0$ as $q^L \to \infty$. For this choice of $\delta$ and $\epsilon$, it follows that $\gamma \leq \delta-\epsilon=\frac{N-K-1}{N}$, which implies that~\eqref{eq:Pet} holds for all $t \leq N-K-1$.


Note that the algorithms in Section~\ref{sec_alg} have the ML certificate property. Note, also, that the fraction of error matrices that give rise to an undetected error is upper bounded by the fraction of bad error matrices. This is simply because without a bad error matrix of Hamming weight up to $(\delta-\epsilon)N$, an undetected error cannot occur. Thus, $P_{\rm ML}(t)\leq P_e(t)$. Since $P_e(t)$ vanishes as $q^L \to \infty$, then $P_{\rm ML}(t)$ vanishes as $q^L \to \infty$. Moreover, $N$ and $K$ are fixed and finite, and hence, $\sum_{t=1}^{N-K-1} P_{\rm ML}(t) \to 0 $ as $ q^L \to \infty$.
\end{proof}

\section{Analysis of probablity of failure and probability of error for the real field}
In this section, we analyze the probability of failure and probability of error under the GRE model when the computations are performed over the real field. In particular, we consider the case that the error values are independently and identically distributed standard Gaussian random variables (with zero mean and unit variance). Note, however, that this assumption does not limit the generality of the results, and is made for the ease of exposition only. For this model, conditioned on $t$ errors occurring, the probability of failure ($P_{F}(t)$) and the probability of undetected error ($P_{\mathrm{ML}}(t)$) are given by
\begin{equation*}
P_{F}(t)= \frac{\int_{\mathcal{E}_1(t)} \phi(\mathbf{x}) \ d\mathbf{x}}{\int_{\mathcal{E}(t)} \phi(\mathbf{x}) \ d\mathbf{x} }, \quad P_{\rm ML}(t) = \frac{\int_{\mathcal{E}_2(t)} \phi(\mathbf{x}) \ d\mathbf{x}}{\int_{\mathcal{E}(t)} \phi(\mathbf{x}) \ d\mathbf{x}},
\end{equation*} where $\mathcal{E}_1(t), \mathcal{E}_2(t), \mathcal{E}(t)$ are defined as in (\ref{eqn:errorsets}), and $\phi(\mathbf{x})$ is the probability density function of an $Lt$-dimensional standard Gaussian random vector (with zero-mean vector and identity covariance matrix).

\subsection{Probability of Failure}
It should be noted that the results of~\cite{schmidt2009collaborative} for finite fields cannot be directly extended to the real field, simply because the counting arguments used in~\cite{schmidt2009collaborative} for finite fields do not carry over to the real field. In this section, we propose a new approach to derive the probability of failure for the real field case.


For simplifying the notation, hereafter, we use $\rho\triangleq N-K-t$. Suppose that $t \leq t_{\rm max}=\frac{L}{L+1}(N-K)$ errors occur at positions $j_1,j_2,\cdots,j_t$ with values $e_{j_1}^{(l)},e_{j_2}^{(l)},\cdots,e_{j_t}^{(l)}$ for the $l$th RS code. 
Recall the syndrome matrix $\mathbf{S}^{(l)}(t)$ for the $l$th RS code (see~\eqref{eq:SlTl}).  
As shown in~\cite{schmidt2009collaborative}, $\mathbf{S}^{(l)}(t)$ can be decomposed as \[\mathbf{S}^{(l)}(t)=\mathbf{H}^{(l)}(t) \cdot \mathbf{F}^{(l)}(t) \cdot \mathbf{D}(t) \cdot \mathbf{Y}(t),\] where $\mathbf{H}^{(l)}(t) \triangleq (\alpha^{(i-1)}_{j_k})_{i\in [\rho], k\in [t]}$ is an $\rho\times t$ matrix, $\mathbf{F}^{(l)}(t)\triangleq\text{diag}((e_{j_i}^{(l)})_{i\in [t]})$ is a $t\times t$ diagonal matrix, $\mathbf{D}(t)\triangleq\text{diag}((\alpha_{j_i})_{i\in [t]})$ is a $t\times t$ diagonal matrix, and $\mathbf{Y}(t)\triangleq(\alpha^{(k-1)}_{j_i})_{i\in [t],k\in [t]}$ is a $t\times t$ matrix. 

\begin{theorem}
Under the GRE model, for all $t\leq t_{\max} = \frac{L}{L+1}(N-K)$, $P_{F}(t) = 0$. In particular, for $L\geq N-K-1$, for all $t\leq N-K-1$, $P_{F}(t) = 0$. 
\end{theorem}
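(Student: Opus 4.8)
The plan is to bound $P_F(t)$ by the probability that the stacked syndrome matrix $\mathbf{S}_L(t)$ is rank deficient, and then show that rank deficiency occurs only on a set of Gaussian measure zero. As recalled above, a necessary condition for failure of either the collaborative Peterson or the MSSR algorithm is that $\mathbf{S}_L(t)$ fails to have full column rank $t$; hence $\mathcal{E}_1(t) \subseteq \{\mathbf{E} : \mathrm{rank}(\mathbf{S}_L(t)) < t\}$, and it suffices to prove that $\mathrm{Pr}[\,\mathrm{rank}(\mathbf{S}_L(t)) < t \mid W_H(\mathbf{E}) = t\,] = 0$ for every $t \le t_{\max}$. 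Throughout, conditioned on errors at positions $j_1,\dots,j_t$, the $Lt$ error values $\{e_{j_i}^{(l)}\}_{i\in[t],\,l\in[L]}$ form an absolutely continuous (standard Gaussian) random vector.

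First I would use the decomposition $\mathbf{S}^{(l)}(t) = \mathbf{H}(t)\,\mathbf{F}^{(l)}(t)\,\mathbf{D}(t)\,\mathbf{Y}(t)$ recalled above, where the factor $\mathbf{H}^{(l)}(t)$ is in fact independent of $l$ and is written $\mathbf{H}(t)$, an $\rho\times t$ Vandermonde-type matrix with $\rho = N-K-t$. The error values enter only through the diagonal factor $\mathbf{F}^{(l)}(t) = \mathrm{diag}(e_{j_1}^{(l)},\dots,e_{j_t}^{(l)})$. Since the evaluation points $\alpha_{j_1},\dots,\alpha_{j_t}$ are distinct and nonzero, both $\mathbf{D}(t)$ and the Vandermonde matrix $\mathbf{Y}(t)$ are invertible, so $\mathrm{rank}(\mathbf{S}_L(t)) = \mathrm{rank}(\mathbf{M})$, where $\mathbf{M}$ is the vertical stack of the blocks $\mathbf{H}(t)\mathbf{F}^{(l)}(t)$, $l\in[L]$. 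A direct computation shows that the $k$th column of $\mathbf{M}$ is $\big(e_{j_k}^{(1)}\mathbf{h}_k;\,\dots;\,e_{j_k}^{(L)}\mathbf{h}_k\big)$, where $\mathbf{h}_k$ is the $k$th column of $\mathbf{H}(t)$; that is, column $k$ is the Kronecker product of the vector of errors at position $j_k$ across the $L$ codes with the Vandermonde column $\mathbf{h}_k$.

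The crux is to show that $g(\mathbf{E}) \triangleq \det\!\big(\mathbf{S}_L^{\mathsf T}(t)\mathbf{S}_L(t)\big)$ — which is a polynomial in the $Lt$ error values and vanishes exactly when $\mathbf{S}_L(t)$ is rank deficient — is not the zero polynomial. I would exhibit one error assignment making $\mathbf{M}$ full rank. Because $t \le t_{\max}$ is equivalent to $t \le L\rho$, the index set $[t]$ can be partitioned into groups $G_1,\dots,G_L$ with $|G_l| \le \rho$; setting $e_{j_k}^{(l')} = 1$ when $k\in G_{l'}$ and $0$ otherwise makes column $k$ (for $k\in G_l$) supported only in block $l$, where it equals $\mathbf{h}_k$. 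Columns assigned to different blocks live in orthogonal coordinate subspaces, and within a block the vectors $\{\mathbf{h}_k : k\in G_l\}$ are at most $\rho$ distinct columns of a Vandermonde matrix with distinct nodes, hence linearly independent. Thus all $t$ columns of $\mathbf{M}$ are independent, $\mathrm{rank}(\mathbf{M}) = t$, and $g\not\equiv 0$.

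Finally I would invoke the standard fact that the zero set of a nonzero polynomial on $\mathbb{R}^{Lt}$ is a proper algebraic variety and therefore has Lebesgue measure zero. Since the error vector is absolutely continuous under the GRE model, the probability that $g(\mathbf{E})=0$ — equivalently that $\mathbf{S}_L(t)$ is rank deficient — is zero, so $P_F(t)=0$ for all $t\le t_{\max}$. The ``in particular'' claim follows because $L \ge N-K-1$ forces $t_{\max} = \tfrac{L}{L+1}(N-K) \ge N-K-1$, so every $t \le N-K-1$ satisfies $t\le t_{\max}$. I expect the only genuinely delicate step to be the explicit full-rank construction of the third paragraph (the pigeonhole plus Kronecker/Vandermonde argument); the reduction via the quoted decomposition and the concluding ``nonzero polynomial vanishes on a null set'' argument are routine.
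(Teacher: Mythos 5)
Your proof is correct, but its core mechanism differs genuinely from the paper's. Both arguments start the same way: failure requires $\mathbf{S}_L(t)$ to be rank deficient, and the factorization $\mathbf{S}^{(l)}(t)=\mathbf{H}^{(l)}(t)\,\mathbf{F}^{(l)}(t)\,\mathbf{D}(t)\,\mathbf{Y}(t)$ with $\mathbf{D}(t),\mathbf{Y}(t)$ invertible reduces everything to the rank of the stack of the blocks $\mathbf{H}^{(l)}(t)\mathbf{F}^{(l)}(t)$. From there the paper works directly with the set of error matrices admitting a nonzero left null vector $\mathbf{v}$: it treats $t\leq L$ by a generic row-rank observation, and for $t>L$ stratifies by the weight $w$ of $\mathbf{v}$, showing for $w\leq\rho$ that the $w\times w$ submatrix $\mathbf{H}_w$ generates a GRS code (hence is invertible, forcing an all-zero $w\times L$ block of $\mathbf{F}$) and for $w>\rho$ extracting, via a rank argument on $\mathbf{\Phi}_w$ and a $w\times\rho$ Vandermonde matrix, a nontrivial linear constraint on the error entries; each stratum has measure zero and a union bound over $w$ finishes. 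You instead package rank deficiency as the vanishing of the single polynomial $g(\mathbf{E})=\det\big(\mathbf{S}_L^{\mathsf{T}}(t)\mathbf{S}_L(t)\big)$ in the $Lt$ Gaussian error values, and prove $g\not\equiv 0$ with one explicit witness: using the equivalence $t\leq\frac{L}{L+1}(N-K)\iff t\leq L\rho$ (which you verify correctly), a pigeonhole partition of the $t$ error positions into $L$ groups of size at most $\rho$ with $0/1$ error values makes the columns of the stacked matrix block-supported, each block holding at most $\rho$ distinct-node Vandermonde columns, hence full column rank; the Lebesgue-null zero set of a nonzero polynomial plus absolute continuity of the GRE model then gives $P_F(t)=0$, and your derivation of the ``in particular'' clause from $L\geq N-K-1$ is also right. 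Your route is shorter, avoids the case analysis on $\mathrm{wt}(\mathbf{v})$ entirely, and makes transparent why the threshold $t\leq L\rho$ is exactly what is needed (in the paper this inequality surfaces only implicitly as $w\leq t\leq\rho L$ in case (2)); the paper's stratified argument, in exchange, exhibits the explicit algebraic constraints characterizing each failure stratum. I see no gap in your proof.
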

\begin{proof}
The decoding algorithms described in Section~\ref{sec_alg} fail when the stacked matrix $\mathbf{S}_L(t)$ defined in (\ref{eq:4}) is rank deficient, i.e.,  there exists a non-zero row-vector $\mathbf{v}$ such that $\mathbf{S}_L(t)\cdot\mathbf{v}^{\mathsf{T}}=0$. Alternatively, $\mathbf{S}_L(t)$ is rank deficient iff there exists a non-zero row-vector $\mathbf{v}$ such that
\begin{equation}\label{eq:1}
  \mathbf{S}^{(l)}(t)\cdot \mathbf\mathbf{v}^{\mathsf{T}}=(\mathbf{H}^{(l)}(t) \cdot \mathbf{F}^{(l)}(t) \cdot \mathbf{D}(t) \cdot \mathbf{Y}(t)) \cdot \mathbf{v}^{\mathsf{T}} =0 \quad \forall  l\in [L].
\end{equation} 
Since $\mathbf{D}(t)$ and $\mathbf{Y}(t)$ are invertible, the condition~\eqref{eq:1} holds iff there is a non-zero row-vector $\mathbf{v}$ such that 
\begin{equation} \label{eq:2}
 (\mathbf{H}^{(l)}(t) \cdot \mathbf{F}^{(l)}(t))\cdot \mathbf{v}^{\mathsf{T}}=0 \quad \forall l\in [L].
\end{equation}
Let $\mathbf{v}=(v_1,v_2,\dots,v_t)$, and let $f_{i,l}\triangleq e^{(l)}_{j_i}$ for all $i\in [t]$. 
Expanding~\eqref{eq:2}, it is easy to see that 
\begin{equation} \label{eq:8}
\underbrace{
\begin{pmatrix}
v_1&v_2&\cdots&v_t \\
v_1 \cdot \alpha_{j_1} &v_2 \cdot \alpha_{j_2} & \cdots & v_t \cdot \alpha_{j_t} \\
v_1 \cdot \alpha^{2}_{j_1} &v_2 \cdot  \alpha^{2}_{j_2} & \cdots &v_t \cdot  \alpha^{2}_{j_t} \\
\vdots & \vdots & & \vdots \\
v_1 \cdot \alpha^{(\rho-1)}_{j_1} &v_2 \cdot  \alpha^{(\rho-1)}_{j_2} & \cdots & v_t \cdot \alpha^{(\rho-1)}_{j_t}
\end{pmatrix}}_{\mathbf{H}} \underbrace{
\begin{pmatrix}
f_{1,l} \\
f_{2,l}  \\
\vdots\\
f_{t,l}
\end{pmatrix}}_{\mathbf{f}^{(l)}} =0. 
\end{equation} Combining the condition~\eqref{eq:8} for all the RS codes in the IRS code (for all $l\in [L]$), it holds that
\begin{equation} \label{eq:9}
\mathbf{H}\cdot \mathbf{F} = 0,    
\end{equation} where $\mathbf{H}$ is defined in~\eqref{eq:8}, and $\mathbf{F}\triangleq (\mathbf{f}^{(1)},\mathbf{f}^{(2)},\dots,\mathbf{f}^{(L)})$ is a $t\times L$ matrix where $\mathbf{f}^{(l)}$ for $l\in [L]$ is defined in~\eqref{eq:8}. 
Alternatively,~\eqref{eq:9} can be written as
\begin{equation}\label{eq:10}
\mathbf{v}\cdot \mathbf{\Phi} = 0,    
\end{equation} where $\mathbf{\Phi}$ is a $t\times \rho L$ matrix given by
\begin{equation}\label{eq:Phi}
\mathbf{\Phi} \triangleq \begin{pmatrix}
f_{1,1} &\cdots &f_{1,L}& 
(\alpha_{j_1} f_{1,1} )&\cdots & (\alpha_{j_1} f_{1,L})&\cdots& 
(\alpha^{(\rho-1)}_{j_1} f_{1,1}) &\cdots &(\alpha^{(\rho-1)}_{j_1} f_{1,L})\\
f_{2,1} &\cdots &f_{2,L}& 
(\alpha_{j_2} f_{2,1}) &\cdots & (\alpha_{j_2} f_{2,L})&\cdots&
(\alpha^{(\rho-1)}_{j_2} f_{2,1}) &\cdots & (\alpha^{(\rho-1)}_{j_2} f_{2,L})\\
\vdots& & \vdots&\vdots& & \vdots& & \vdots& & \vdots\\
f_{t,1} &\cdots &f_{t,L}&
(\alpha_{j_t} f_{t,1} )&\cdots & (\alpha_{j_t} f_{t,L})&\cdots&
(\alpha^{(\rho-1)}_{j_t} f_{t,1}) &\cdots & (\alpha^{(\rho-1)}_{j_t} f_{t,L})
\end{pmatrix}.
\end{equation}

Let $\mathcal{F}$ be the set of all $t\times L$ matrices $\mathbf{F}=(f_{i,l})_{i\in [t],l\in [L]}$ for each of which the condition~\eqref{eq:9} holds for some non-zero vector $\mathbf{v}$. We need to show that $\mathcal{F}$ is a set of measure zero. 

We consider two cases as follows: (i) $t\leq L$, and (ii) $t>L$. 

\emph{Case (i):} For the condition~\eqref{eq:9} to hold, there must exist a non-zero vector $\mathbf{v}$ in the left null space of $\mathbf{F}$. It is easy to see that, under the GRE model, the set of all matrices $\mathbf{F}$ that have a row-rank of $t$ is a set of measure $1$. This implies that the set of all matrices $\mathbf{F}$ for each of which there exists some non-zero vector $\mathbf{v}$ in the left null space of $\mathbf{F}$ is a set of measure zero. Thus, for $t\leq L$, $\mathcal{F}$ is a set of measure zero.

\emph{Case (ii):} For a vector $\mathbf{v}$, let the weight of $\mathbf{v}$, denoted by $\mathrm{wt}(\mathbf{v})$, be the number of non-zero elements in $\mathbf{v}$. For any integer $1\leq w\leq t$, let $\mathcal{F}_{w}$ be the set of all matrices $\mathbf{F}$ for each of which there exists a non-zero vector $\mathbf{v}$ such that $\mathrm{wt}(\mathbf{v})=w$ and the condition~\eqref{eq:9} holds. 

We consider two cases as follows: (1) $w\leq \rho$, and (2) $w>\rho$. (Recall that $\rho = N-K-t$.)

\begin{enumerate}
\item[(1)] $w \leq \rho$: Assume, without loss of generality, that $v_1,v_2,\cdots,v_{w}$ are the non-zero elements of $\mathbf{v}$. 
Let $\mathbf{H}_{w}\triangleq ((v_k\cdot\alpha^{(i-1)}_{j_k})_{i\in [w],k\in [w]})$ be the $w\times w$ sub-matrix of $\mathbf{H}$ (defined in~\eqref{eq:9}) corresponding to the first $w$ rows and the first $w$ columns, and let $\mathbf{F}_w\triangleq ((f_{i,l})_{i\in [w],l\in [L]})$ be the $w\times L$ sub-matrix of $\mathbf{F}$ corresponding to the first $w$ rows. Then, the condition~\eqref{eq:9} reduces to
\begin{equation*}
\mathbf{H}_{w}\cdot \mathbf{F}_{w} = 0.
\end{equation*} 

It is easy to see that the matrix $\mathbf{H}_{w}$ generates a Generalized Reed-Solomon code with distinct parameters $\{\alpha_{j_i}\}_{i\in [w]}$ and non-zero multipliers $\{v_i\}_{i\in [w]}$. Thus, $\mathbf{H}_{w}$ is full rank (and hence, invertible). This implies that for each $l\in [L]$ the column-vector $\mathbf{f}^{(l)}$ (defined in~\eqref{eq:8}) is an all-zero vector. 
Thus, every matrix in $\mathcal{F}_{w}$ for $w\leq \rho$ contains a $w\times L$ all-zero sub-matrix. In particular, every matrix in $\mathcal{F}_{w}$ for $w\leq \rho$ has at least one fixed (zero, in this case) entry. Under the GRE model, it is then easy to see that $\mathcal{F}_{w}$ for $w\leq \rho$ is a set of measure zero.

\item[(2)] $w > \rho$: Assume, without loss of generality, that $v_1,\dots,v_w$ are the non-zero elements of $\mathbf{v}$, and let $\tilde{\mathbf{v}}\triangleq (v_1,v_2,\cdots,v_w)$. 
Let $\mathbf{\Phi}_w$ be the $w\times \rho L$ sub-matrix of $\mathbf{\Phi}$ (defined in~\eqref{eq:Phi}) corresponding to the first $w$ rows, 
\begin{align*}
\mathbf{\Phi}_w &\triangleq 
\begin{pmatrix}
f_{1,1} &\cdots &f_{1,L}& 
(\alpha_{j_1} f_{1,1} )&\cdots & (\alpha_{j_1} f_{1,L})&\cdots& 
(\alpha^{(\rho-1)}_{j_1} f_{1,1}) &\cdots &(\alpha^{(\rho-1)}_{j_1} f_{1,L})\\
f_{2,1} &\cdots &f_{2,L}& 
(\alpha_{j_2} f_{2,1}) &\cdots & (\alpha_{j_2} f_{2,L})&\cdots&
(\alpha^{(\rho-1)}_{j_2} f_{2,1}) &\cdots & (\alpha^{(\rho-1)}_{j_2} f_{2,L})\\
\vdots& & \vdots&\vdots& &\vdots& &\vdots& & \vdots\\
f_{w,1} &\cdots &f_{w,L}&
(\alpha_{j_w} f_{w,1} )&\cdots & (\alpha_{j_w} f_{w,L})&\cdots&
(\alpha^{(\rho-1)}_{j_w} f_{w,1}) &\cdots & (\alpha^{(\rho-1)}_{j_w} f_{w,L})
\end{pmatrix}.    
\end{align*} Then, the condition~\eqref{eq:10} reduces to 
\begin{equation}\label{eq:13}
\tilde{\mathbf{v}}\cdot \mathbf{\Phi}_w = 0.    
\end{equation}

Since in~\eqref{eq:4} the number of variables must be less than the number of equations, then $w \leq t \leq \rho L$. 
Note that $\mathbf{\Phi}_w$ is a $w \times \rho L$ matrix. Thus, $\mathrm{rank}(\mathbf{\Phi}_w)\leq w$. Moreover, there exists a non-zero vector $\tilde{\mathbf{v}}$ in the left null space of $\mathbf{\Phi}_w$. This implies that $\mathrm{rank}(\mathbf{\Phi}_w)\leq w-1$. Since the row-rank and the column-rank are equal, there exists a non-zero column-vector  $\mathbf{u}$ such that \[\mathbf{\Phi}_w\cdot \mathbf{u} =0.\]


Let $\alpha_i \triangleq \alpha_{j_i}$ for $i\in [w]$, and let $\boldsymbol{\alpha}^{(k)}=(\alpha_1^{k-1},\alpha_2^{k-1},\cdots,\alpha_w^{k-1})^{\mathsf{T}}$ for $k\in [\rho]$. We define the product operator $\odot$ between the two vectors $\boldsymbol{\alpha}^{(k)}$ and $\mathbf{f}^{(l)}$ as
\[
\boldsymbol{\alpha}^{(k)}\odot \mathbf{f}^{(l)} \triangleq (\alpha_1^{(k-1)}f_{1,l},\alpha_2^{(k-1)}f_{2,l},\dots,\alpha_w^{(k-1)}f_{w,l})^{\mathsf{T}}.
\]
Then, we can rewrite $\mathbf{\Phi}_w$ as
\begin{equation*}\label{eq:14}
\left(\boldsymbol{\alpha}^{(1)}\odot \mathbf{f}^{(1)},\dots,\boldsymbol{\alpha}^{(1)}\odot \mathbf{f}^{(L)},\boldsymbol{\alpha}^{(2)}\odot \mathbf{f}^{(1)},\dots,\boldsymbol{\alpha}^{(2)}\odot \mathbf{f}^{(L)},\dots,\boldsymbol{\alpha}^{(\rho)}\odot \mathbf{f}^{(1)}\dots,\boldsymbol{\alpha}^{(\rho)}\odot \mathbf{f}^{(L)}\right).
\end{equation*}

Since $\mathbf{u} = (u_{1},\dots,u_L,u_{L+1},\dots,u_{L+L},\dots,u_{(\rho-1)L+1},\dots,u_{(\rho-1)L+L})\neq 0$, there exist $l\in [L]$ and $k\in [\rho]$ such that $u_{(k-1)L+l}$ is non-zero. Assume, without loss of generality, that $u_{1}\neq 0$. 
Consider the columns $\boldsymbol{\alpha}^{(1)} \odot \mathbf{f}^{(1)},\boldsymbol{\alpha}^{(2)} \odot \mathbf{f}^{(1)},\dots,\boldsymbol{\alpha}^{(\rho)} \odot \mathbf{f}^{(1)}$ in the matrix $\mathbf{\Phi}_w$, and their corresponding elements $u_{1},u_{L+1},\dots,u_{(\rho-1)L+1}$ in the vector $\mathbf{u}$. Let $\tilde{u}_k\triangleq u_{(k-1)L+1}$ for $k\in [\rho]$, and let $\tilde{\mathbf{u}}\triangleq (\tilde{u}_1,\dots,\tilde{u}_{\rho})$. Note that $\tilde{\mathbf{u}}\neq 0$ (by construction). Consider the vector \[\mathbf{g}\triangleq\tilde{u}_1 (\boldsymbol{\alpha}^{(1)} \odot \mathbf{f}^{(1)}) + \tilde{u}_2 (\boldsymbol{\alpha}^{(2)} \odot \mathbf{f}^{(1)}) +\cdots +\tilde{u}_{\rho} (\boldsymbol{\alpha}^{(\rho)} \odot \mathbf{f}^{(1)}).\] 
Expanding $\mathbf{g} = (g_1,\dots,g_w)^{\mathsf{T}}$, we get $g_i=(\tilde{u}_1\alpha_i^0+\tilde{u}_2\alpha_i^1+\dots+\tilde{u}_{\rho}\alpha_i^{\rho-1})f_{i,1}$ for all $i\in [w]$.
Note that there exists $i\in [w]$ such that the coefficient of $f_{i,1}$ in $g_i$, i.e., $\tilde{u}_1\alpha_i^0+\tilde{u}_2\alpha_i^1+\dots+\tilde{u}_{\rho}\alpha_i^{\rho-1}$, is non-zero. The proof is by the way of contradiction. Suppose that for all $i\in [w]$ the coefficient of $f_{i,1}$ in $g_i$ is zero. Let $\mathbf{M}\triangleq ((\alpha_i^{k-1})_{i\in [w],k\in [\rho]})$. Then it is easy to see that $\mathbf{M}\cdot\tilde{\mathbf{u}} = 0$. 
Since $\mathbf{M}$ is a $w\times \rho$ Vandermonde matrix with $\rho < w$, then $\mathrm{rank}(\mathbf{M})=\rho$. This implies that $\tilde{\mathbf{u}}=0$. This is however a contradiction because $\tilde{\mathbf{u}}\neq 0$ (by assumption). Thus, for some $i\in [w]$ the coefficient of $f_{i,1}$ in $g_i$ must be non-zero. Thus, every matrix in $\mathcal{F}_w$ for $w>\rho$ contains at least one entry which can be written as a linear combination of the rest of the entries. Under the GRE model, this readily implies that $\mathcal{F}_w$ is a set of measure zero. 
\end{enumerate}

Noting that $\mathcal{F}=\cup_{w=1}^{t} \mathcal{F}_w$ and taking a union bound over all $w$ ($1\leq w \leq t$), it follows that for $t>L$, $\mathcal{F}$ is a set of measure zero. This completes the proof. 
\end{proof}


\subsection{Probability of Undetected Error}
Similarly as in the case of the finite fields, both the MSSR decoding algorithm and the collaborative Peterson's decoding algorithm give an error locator polynomial $\Lambda(z)$ over the real field ($\mathbb{R}$) of the least possible degree which satisfies all the syndrome equations in~\eqref{eq:4}. This implies that these decoding algorithms have the ML certificate property (for details, see Section~\ref{sec:finitefieldpe}).

As was shown by Dutta {\em et al.} in~\cite[Theorem~3]{dutta2018unified}, under the GRE model, when the number of errors (i.e., the Hamming weight of the error matrix) is less than $N-K$, with probability $1$ the closest codeword to the received vector is the transmitted codeword. This implies that for any decoding algorithm satisfying the ML certificate property, the set of all bad error matrices (defined in Section~\ref{sec:finitefieldpe}) is of measure zero, and thereby, the probability of undetected error is zero.

\begin{theorem}
Under the GRE model, for all $t \leq N-K-1$ (and in particular, for all $t\leq t_{\max}=\frac{L}{L+1}(N-K)$), $P_{\mathrm{ML}}(t) = 0$. 
\end{theorem}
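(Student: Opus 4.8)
The plan is to reduce the statement to the definition of a bad error matrix and then invoke the geometric result of Dutta \emph{et al.} First I would recall, as established in Section~\ref{sec:finitefieldpe}, that both the MSSR algorithm and the collaborative Peterson's algorithm possess the ML certificate property: whenever they do not declare failure they return the codeword of $\mathcal{C}_{\rm IRS}$ at minimum Hamming distance from $\mathbf{R}$. Consequently an undetected error of weight $t$ can occur only if there exists a codeword $\mathbf{D}' \in \mathcal{C}_{\rm IRS}$ with $\mathbf{D}' \neq \mathbf{D}$ and $W_H(\mathbf{R} \ominus \mathbf{D}') \leq t$. Writing $\mathbf{R} = \mathbf{D} \oplus \mathbf{E}$ with $W_H(\mathbf{E}) = t$ and using the linearity of $\mathcal{C}_{\rm IRS}$, the difference $\mathbf{D}'' \triangleq \mathbf{D}' \ominus \mathbf{D}$ is a non-zero codeword satisfying $W_H(\mathbf{E} \ominus \mathbf{D}'') \leq t$. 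Thus every $\mathbf{E} \in \mathcal{E}_2(t)$ is a bad error matrix of Hamming weight $t$, i.e. $\mathcal{E}_2(t)$ is contained in the set of bad error matrices.

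Next I would translate the bad-error-matrix condition into the event handled by \cite[Theorem~3]{dutta2018unified}. The existence of a non-zero codeword $\mathbf{D}''$ with $W_H(\mathbf{E} \ominus \mathbf{D}'') \leq t$ is precisely the event that the received word $\mathbf{R} = \mathbf{D} \oplus \mathbf{E}$ admits a competing codeword $\mathbf{D}' = \mathbf{D} \oplus \mathbf{D}''$ at Hamming distance at most $t = W_H(\mathbf{E})$ from it, that is, the event that the transmitted codeword $\mathbf{D}$ fails to be the unique closest codeword to $\mathbf{R}$. By \cite[Theorem~3]{dutta2018unified}, under the GRE model this event has probability $0$ whenever the number of errors is strictly less than $N-K$, i.e. for all $t \leq N-K-1$. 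Hence the set of bad error matrices of Hamming weight $t$ is of measure zero with respect to the conditional Gaussian density $\phi$ on weight-$t$ error matrices.

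Combining the two steps, $\mathcal{E}_2(t)$ is contained in a set of measure zero, so $\int_{\mathcal{E}_2(t)} \phi(\mathbf{x}) \, d\mathbf{x} = 0$ and therefore $P_{\mathrm{ML}}(t) = 0$ for every $t \leq N-K-1$. In particular this covers all $t \leq t_{\max} = \frac{L}{L+1}(N-K)$, since $t_{\max} < N-K$ forces any integer $t \leq t_{\max}$ to satisfy $t \leq N-K-1$. The only genuine obstacle is the correspondence in the first two paragraphs: one must verify that the ML certificate property makes ``undetected error'' equivalent (not merely implied by) a non-unique nearest codeword, and that the containment $\mathcal{E}_2(t) \subseteq \{\text{bad error matrices}\}$ together with \cite[Theorem~3]{dutta2018unified} transfers the measure-zero property to the \emph{conditional} density $\phi$ restricted to weight-$t$ patterns; once this bookkeeping is in place, the result follows immediately.
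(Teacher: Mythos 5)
Your proposal is correct and follows essentially the same route as the paper: the ML certificate property of the MSSR and collaborative Peterson's algorithms reduces any undetected error to the existence of a bad error matrix, and \cite[Theorem~3]{dutta2018unified} makes the set of such matrices measure zero under the GRE model for all $t \leq N-K-1$. The bookkeeping you flag at the end is glossed over in the paper but is harmless, and only the one-way implication $\mathcal{E}_2(t)\subseteq\{\text{bad error matrices}\}$ is actually needed, exactly as you use it.
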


\section{Numerical Results}
\begin{figure}
\begin{center}
\begin{tikzpicture}[scale=0.7, transform shape]
\begin{axis}[
scale only axis,
xmin=0, xmax=6, ymin=0, ymax=1, ymode=linear,yminorticks=true,
xlabel=Number of errors, ylabel=Probability of error - $P_F(t) + P_{\rm ML}(t)$, legend pos=north west]
\addplot[color=blue,dotted,mark=*,mark options={solid},line width = 2pt]
table[row sep=crcr]{%
    1.0000    0\\
    2.0000    0\\
    3.0000    0\\
    4.0000    1\\
    5.0000    1\\
    6.0000    1\\
};
\addlegendentry{Individual GRS decoding ($L=1$)}
\addplot [color=red,solid,mark=triangle*,mark size = 3pt, mark options={solid},line width = 2pt]
table[row sep=crcr]{%
    1.0000         0\\
    2.0000    0\\
    3.0000    0\\
    4.0000    0\\
    5.0000    0\\
    6.0000    1\\
};
\addlegendentry{CPDA $L=6$}
\addplot [color=black,dashed,mark=square*,mark options={solid},line width = 2pt]
table[row sep=crcr]{%
    1.0000         0\\
    2.0000    0.06\\
    3.0000    0.23\\
    4.0000    0.47\\
    5.0000    0.71\\
    6.0000    0.9772\\
};
\addlegendentry{$\ell_1$minimization decoder of~\cite{candes2005decoding}}
\end{axis}
\end{tikzpicture}
\end{center}
\caption{Probability of error for CPDA and $\ell_1$-minimization decoders, $N=8\ K=2$} 
\label{fig:combined82}
\end{figure}
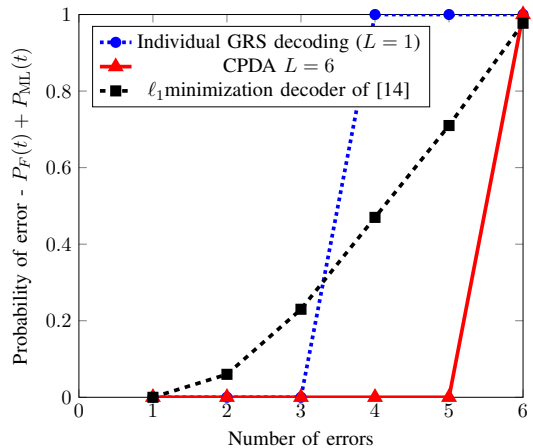
We present simulation results for $N=8$, $K=2$, and $\alpha_i=0.9^i$ for different $L$.  Fig.~\ref{fig:combined82} shows the probability of error ($P_e(t) = P_F(t) + P_{\rm ML}(t)$) for decoding RS codes individually using Peterson's algorithm ($L=1$), decoding RS codes individually using the $\ell_1$ minimization decoder, and collaborative decoding using the CPDA algorithm with $L=6$. For each data point, 12500 IRS codewords were simulated.  It can be seen that the CPDA with $L=6$ corrects all $t$ errors for $t \leq N-K-1$, which is a significant improvement over decoding RS codes individually. This is consistent with the theoretical results. The probability of error for the $\ell_1$ minimization decoder remains fairly high for several values of $t \leq N-K-1$. These results are consistent with the results of Candes and Tao (Figures 2 and 3 in~\cite{candes2005decoding}). This shows that individually decoding RS decoder 
using the $\ell_1$-minimization decoder does not suffice to achieve small probability of error as suggested in~\cite{dutta2018unified}; whereas, collaborative decoding can achieve the decoding radius bound of $N-K-1$ with polynomial complexity.

For larger values of $N$ and $K$, we noticed that computing the rank of $\mathbf{S}_L(t)$ had numerical inaccuracies. This is a well-known issue with decoding RS codes over the real field. Interestingly, from simulations, we observe that collaborative decoding seems to alleviate this issue. Table~\ref{table:CPDA} shows the probability of error ($P_e(t)=P_F(t)+P_{\rm ML}(t)$) for $N=20$, $K=12$ and $\alpha_i=i$. For a fixed number of errors, increasing $L$ improved the condition number of $\mathbf{S}_L(t)^{\sf T} \mathbf{S}_L(t)$. With $L=20$, we were able to decode up to $N-K-1$ errors with $P_e(t)=0$ in 12500 trials. 

\begin{table}[h]
    \centering
    \caption{Probability of error for the CPDA, $N=20\ K=12 $, 12500 trials}
    \begin{tabular}{|c|c|c|c|c|c|c|c|}
    \hline
     $L \backslash t$     & 1 & 2 & 3 & 4 & 5 & 6 &  7\\
    \hline
    1 & 0 & 0 & 0 & 0.0008  & - & - & - \\
    \hline
    2 & 0 & 0 & 0 & 0 & 0 & - & - \\
    \hline
    3 & 0 & 0 & 0 & 0 & 0 & 0 & - \\
    \hline
    4 & 0 & 0 & 0 & 0 & 0 & 0 & - \\
    \hline
    5 & 0 & 0 & 0 & 0 & 0 & 0 & - \\
    \hline
    6 & 0 & 0 & 0 & 0 & 0 & 0 & - \\
    \hline
    7 & 0 & 0& 0 & 0 & 0 & 0 & 0.0026 \\
    \hline
    8 & 0 & 0 & 0 & 0 & 0 & 0 & 0.0008 \\
    \hline
    20 & 0 & 0 & 0 & 0 & 0 & 0 & 0 \\
    \hline
    \end{tabular}
    \label{table:CPDA}
\end{table}

 \begin{figure}
\begin{center}
\begin{tikzpicture}
\begin{axis}[
scale only axis,
xmin=0, xmax=6, ymin=1e-1, ymax=1e+15, ymode=log,yminorticks=true,
xlabel=Number of errors, ylabel= Average condition number of $\mathbf{S}_L^T(t)\mathbf{S}_L(t)$,legend pos=south east]
\addplot[color=red,solid,mark=*,mark options={solid}]
table[row sep=crcr]{%
1 0.000000000000010e+14\\
2 0.000007544282585e+14\\
3 0.406160716134754e+14\\
};
\addlegendentry{$L=1$}
\addplot [color=black,solid,mark=*,mark options={solid}]
table[row sep=crcr]{%
1 0.000000000000010e+14\\
2 0.000000000316837e+14\\
3 0.000001530774385e+14\\
4 0.347521761551090e+14\\
};
\addlegendentry{$L=2$}
\addplot [color=blue,solid,mark=*,mark options={solid}]
table[row sep=crcr]{%
1 0.000000000000010e+14\\   
2 0.000000000043387e+14\\   
3 0.000000077336598e+14\\ 
4 0.000125288513636e+14\\
};
\addlegendentry{$L=3$}
\addplot [color=teal,solid,mark=*,mark options={solid}]
table[row sep=crcr]{%
1 0.000000000000010e+14\\
2 0.000000000020670e+14\\
3 0.000000017455253e+14\\
4 0.000013714872979e+14\\  
};
\addlegendentry{$L=4$}
\addplot [color=brown,solid,mark=*,mark options={solid}]
table[row sep=crcr]{%
1 0.000000000000010e+14\\
2 0.000000000014349e+14\\
3 0.000000008465952e+14\\
4 0.000004139746603e+14\\
5 1.877525707810815e+14\\
};
\addlegendentry{$L=5$}
\end{axis}
\end{tikzpicture}
\end{center}
\caption{Average condition number of $\mathbf{S}_L^T(t)\mathbf{S}_L(t)$, $N=8\  K=2$ } \label{fig:cnum}
\end{figure}
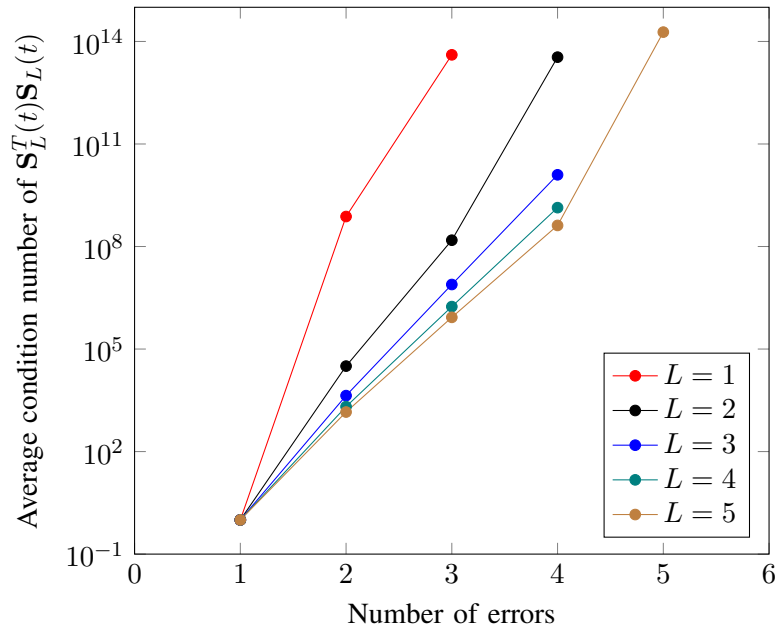
 
 Our results have shown that collaborative decoding of Polynomial codes can correct up to $t_{\max}=\frac{L}{L+1}(N-K)$ errors. It can be seen that $t_{\rm max} = N-K-1$ for all $L \geq N-K-1$ and hence, it is natural to wonder if there is any advantage in increasing $L$ beyond $N-K-1$. Here we empirically show that increasing $L$ improves the numerical stability of the collaborative Peterson's algorithm for determining the error locator polynomial. Fig.~\ref{fig:cnum} ($N=8$, $K=2$, $\alpha_i=0.9^i$) shows a plot of the average condition number of the stacked syndome matrix $\mathbf{S}_L(t)$ (defined in~\eqref{eq:4}) as a function of $t$ for different $L$. It can be seen from simulations that for all $t$, increasing $L$ decreases the average condition number. Since the collaborative Peterson's algorithm requires inversion of the matrix $\mathbf{S}_{L}^{\mathsf{T}}(t) \mathbf{S}_L(t)$, the numerical stability of the algorithm will improve with increasing $L$.
\bibliographystyle{IEEEtran}
\bibliography{IEEEabrv,collab}

\end{document}